\titleformat{\section}{\Large\bfseries}{\thesection .}{0.4em}{}
\titleformat{\subsection}{\large\bfseries}{\thesubsection .}{0.4em}{}
\numberwithin{equation}{section}
\theoremstyle{plain}
\newtheorem{theorem}{Theorem}
\newtheorem{lemma}{Lemma}
\DeclareMathOperator*{\esssup}{ess\,sup}
\begin{document}

%---------------------------------------------------------------------------------------------------
 
\title{Continuous Cluster Expansion for Field Theories}  
\author{Fang-Jie Zhao}  
\date{}      
\maketitle 

\begin{abstract}
\noindent
A new version of the cluster expansion is proposed without breaking the translation and rotation invariance. 
As an application of this technique, we construct the connected Schwinger functions 
of the regularized $\phi^4$ theory in a continuous way.

\end{abstract}

\textbf{keywords}
Constructive Field Theory; Cluster Expansion; Translation and Rotation Invariance; Connected Schwinger Function

\textbf{Mathematics Subject Classification}
81T08

%---------------------------------------------------------------------------------------------------
\section{Introduction}

As a constructive tool for field theories, the cluster expansion \cite{FP07,BFP10} can be used to analyze  
the infinite volume limit in a rigorous manner. Contrary to the formal perturbative treatment, it is 
indeed convergent at least for weak coupling. Also its multiscale version, called phase cell expansion 
\cite{MS77,BF82}, plays the central role in the constructive renormalization.

However, the traditional cluster expansion breaks the translation and rotation invariance explicitly, 
since it is based on a fixed discretization of space-time \cite{GJ87,R91,AMR02} or a wavelet decomposition 
of the fields \cite{B99}. Although some remedies \cite{K86,B88,W89} have been invented, none of them 
is fully satisfactory and a manifestly Euclidean invariant expansion is desired. Thanks to the Pauli's 
principle, Fermionic field theories may be constructed continuously by some rearrangements and subtractions 
of the perturbative series \cite{AR98,DR00}. But continuous constructions of the Bosonic ones are more 
difficult, expect for some special cases such as $($Sine-Gordon$)_2$ with $\beta\!<\!4\pi$ \cite{BK87,BF93}. 
Recently, a new constructive technique for Bosonic field theories, called loop vertex expansion \cite{R07}, 
has been well developed. As one of its successes, the new technique has been used to construct the 
regularized $\phi^4$ theory in a continuous way \cite{MR08}. Unfortunately, when it is applied to the 
unregularized $\phi^4$ theory,  the cluster expansion is still required to remove the volume cut-off \cite{RW15}.

In this paper, we provide a continuous version of the cluster expansion, which is based on dynamical 
discretizations of space-time instead of a fixed one and thus retains the translation and rotation 
invariance explicitly. As an example, we construct the connected Schwinger functions of the regularized $\phi^4$ theory in this way. Generalizations to other stable interactions of polynomial type are straightforward. Moreover, 
the phase cell expansion may be realized in a similar way and then it seems promising to construct 
unregularized Bosonic field theories continuously. The progress in this direction will appear in future 
publications.

%---------------------------------------------------------------------------------------------------
\section{Expansion}

Let us denote by $\mathcal{B}$ 
the collection of all Borel subsets of $\mathbb{R}^d$ and by $\mathcal{B}'$ the collection of all bounded ones in $\mathcal{B}$.
We consider a regularized $\phi^4_d$ theory with the interacting part restricted in 
$\Lambda\!\in\! \mathcal{B}'$. 
The generating functional of this theory is
\begin{align}
Z_{\Lambda}[J] = \!\int\! d\mu(\phi)\, e^{-\lambda\!\int_{\Lambda}\!dx\,\phi^4_x + \int_{\Lambda}\!dx\,J_x\phi_x},\;  
J \!\in\! \mathcal{S}(\mathbb{R}^d).
\end{align}
Here $d\mu$ is the Gaussian measure on the space of tempered distributions $\mathcal{S}'(\mathbb{R}^d)$ with covariance
\begin{align}\label{Covariance}
C_{x,y} = \frac{1}{(4\pi)^{d/2}} \int_{\frac12}^{1}\!\! \frac{d\alpha}{\alpha^{d/2}}\, e^{-\alpha-|x-y|^2\!/4\alpha},
\end{align}
which is a regularized version of the full covariance
\begin{align}
((-\Delta\!+\!1)^{-1})_{x,y} = \frac{1}{(4\pi)^{d/2}} \int_{0}^{\infty}\!\! \frac{d\alpha}{\alpha^{d/2}}\, e^{-\alpha-|x-y|^2\!/4\alpha}.
\end{align}
The connected Schwinger functions of the theory are
\begin{align}
S^c_{r,\Lambda;w_1,\dots,w_r} = \tfrac{\delta^r }{\delta J_{w_1}\!\cdots\delta J_{w_r}} \ln Z_{\Lambda}\big|_{J=0},\;\;
w_1,\dots,w_r\!\in\!\mathbb{R}^d.
\end{align}
(The $Z_2$ symmetry of the theory will not be used in the following, since the purpose of this paper is to provide a new expansion for general cases.) 
What we are really interested in is the limit theory as $\Lambda$ approaches $\mathbb{R}^d$,
which can be seen as a simplified model for a single slice in the renormalization group of 
the full $\phi^4_d$ theory \cite{FMRS87,BDH95,AR97}.

For $n \!\ge\! 0$, we denote by $\mathcal{I}_n$ the set of all sequences $(t_j)_{j=1}^n$ with $t_1,\dots,t_n \!\in\! [0,1]$ 
and by $\mathcal{X}_{n}$ the set of all sequences $(x_j)_{j=1}^n$ with $x_1,\dots,x_n\!\in\!\mathbb{R}^d$ 
satisfying $|x_j\!-\!x_{j'}|\!>\! 1$ for $j\!\neq\!j'$. 
For $\bm{x}\!\in\!\mathcal{X}_{0}$, let $B_{\bm{x}}\!=\!B'_{\bm{x}}\!=\!\varnothing$. 
For $(x_j)_{j=1}^n\!\in\!\mathcal{X}_{n}$ with $n \!\ge\! 1$, let
\begin{align}
B_{x_1,\dots,x_n} =  \big\{ y\!\in\!\mathbb{R}^d\!: \min\nolimits_{1\le j\le n}|y\!-\!x_j|\!\le\! 1 \big\}
\end{align}
and $B'_{x_1,\dots,x_n} = B_{x_1,\dots,x_n} \backslash B_{x_1,\dots,x_{n-1}}$.
We recursively define 
$C_{\bm{t};\bm{x}}$ for $\bm{t}\!\in\!\mathcal{I}_n$ and $\bm{x}\!\in\!\mathcal{X}_{n}$ as follows. If $n\!=\!0$, $C_{\bm{t};\bm{x}}\!=\!C$. Otherwise, 
\begin{align}\label{ConvexCombination}
C_{\bm{t};\bm{x}} = t_n C_{\bm{t}';\bm{x}'} + (1\!-\!t_n) \big( \chi_{B_{\bm{x}}} C_{\bm{t}';\bm{x}'} \chi_{B_{\bm{x}}}
\!+\! \chi_{(B_{\bm{x}})^{\mathsf{c}}} C \chi_{(B_{\bm{x}})^{\mathsf{c}}} \big)
\end{align}
for $\bm{t}\!=\!(\bm{t}',t_n)$ and $\bm{x}\!=\!(\bm{x}',x_n)$, where $\chi_A$ denotes the characteristic
function of subset $A$ of $\mathbb{R}^d$. 
By the convexity of the linear combination \eqref{ConvexCombination}, the interpolated covariance $C_{\bm{t};\bm{x}}$ 
remains positive  
and has the bound 
\begin{align}
0 \le (C_{\bm{t};\bm{x}})_{x,y} \le c_1 e^{-2|x-y|}
\end{align}
 from \eqref{Covariance}.
Let $d\mu_{\bm{t};\bm{x}}$ be the Gaussian measure on $\mathcal{S}'(\mathbb{R}^d)$ with covariance $C_{\bm{t};\bm{x}}$
and let $\langle F\rangle_{\bm{t};\bm{x}}^\phi$ be an abbreviation for $\int\! d\mu_{\bm{t};\bm{x}}(\phi)\, F[\phi]$.
For $\bm{t} \!=\! (t_j)_{j=1}^{n} \!\in\! \mathcal{I}_{n}$ and $\bm{x} \!=\! (x_j)_{j=1}^{n} \!\in\! \mathcal{X}_{n}$ with $n\!\ge\!1$,
\begin{align}
\frac{\partial C_{\bm{t};\bm{x}}}{\partial t_n}=
\sum_{k=1}^{n}\Big(\prod_{l=k}^{n-1}t_l\Big)
\Big(\chi_{B'_{x_1,\dots,x_k}}C\chi_{(B_{\bm{x}})^{\mathsf{c}}}+\chi_{(B_{\bm{x}})^{\mathsf{c}}}C\chi_{B'_{x_1,\dots,x_k}}\Big)
\end{align}
and then, by the formula for infinitesimal change of covariance \cite{GJ87},
\begin{align}\label{Partial_tn_F}
\frac{\partial}{\partial t_n}\langle F\rangle_{\bm{t};\bm{x}}^\phi
&= \!\int_{(B_{\bm{x}})^{\mathsf{c}}}\!\!\!dx \,\Big\langle \!\int_{B_{\bm{x}}}\!\!\!dy\, 
\Big( \frac{\partial C_{\bm{t};\bm{x}}}{\partial t_n}\Big)_{x,y} 
\frac{\delta}{\delta\phi_{x}} \frac{\delta}{\delta\phi_y} F \Big\rangle_{\bm{t};\bm{x}}^\phi \notag\\
&= \!\int_{(B_{\bm{x}})^{\mathsf{c}}}\!\!\!dx \,\Big\langle \Big( \sum_{k=1}^{n} \Big(\prod_{l=k}^{n-1}t_l\Big)
\Delta_{x;x_1,\dots,x_k}^\phi \Big) F \Big\rangle_{\bm{t};\bm{x}}^\phi,
\end{align}
where
$\Delta_{x;\bm{x}}^\phi$ is written simply for $\int_{B'_{\bm{x}}}\!dy\, \frac{\delta}{\delta\phi_x} C_{x,y} \frac{\delta}{\delta\phi_y}$. 

We define, for $\Lambda \!\in\! \mathcal{B}'$ and 
$(x_j)_{j=1}^n \!\in\! \mathcal{X}_{n}$ with $n\!\ge\! 1$, 
\begin{align}
\bm{\widetilde{Z}}_{\Lambda;x_1,\dots,x_n} &= \Big( \prod_{j=1}^{n-1} \!\int_0^1\!\!dt_j \Big)
\Big\langle \Big( \prod_{j=2}^{n} \sum_{k=1}^{j-1} \Big(\prod_{l=k}^{j-2}t_l \Big)\Delta_{x_{j};x_1,\dots,x_k}^\phi \Big) \notag\\
& \quad\;\; e^{-\lambda \!\int_{\Lambda}\!dx\,\phi^4_x + \int_{\Lambda}\!dx\,J_x\phi_x}
\Big\rangle_{(t_j)_{j=1}^{n-1};(x_j)_{j=1}^{n-1}}^\phi.
\end{align}
In particular,
$\bm{\widetilde{Z}}_{\Lambda;x_1} \!=\! Z_{\Lambda} \!:= \!\int\! d\mu(\phi)\, 
e^{-\lambda \!\int_{\Lambda}\!dx\, \phi^4_x + \int_\Lambda\!dx\,J_x\phi_x}$,
which is actually independent of $x_1$.
For $n\!\ge\! 2$, we call a map $\eta\!: \{2,3,...,n\}\!\to\!\{1,2,...,n\!-\!1\}$ with $\eta(j)\!<\!j$  an ordered tree with $n$ vertices and denote by $\mathcal{T}_n$ the set of all ordered trees with $n$ vertices.
Then $\bm{\widetilde{Z}}_{\Lambda;x_1,\dots,x_n}$ can be rewritten as
\begin{gather}
\sum_{\eta\in\mathcal{T}_n}\! 
\Big( \prod_{j=1}^{n-1} \!\int_0^1\!\!dt_j \Big) \Big( \prod_{j=2}^n \prod_{l=\eta(j)}^{j-2}t_l \Big) \notag\\
\Big\langle \Big( \prod_{j=2}^n \Delta_{x_j;x_1,\dots,x_{\eta(j)}}^\phi \Big) 
e^{-\lambda \!\int_{\Lambda}\!dx\,\phi^4_x + \int_{\Lambda}\!dx\,J_x\phi_x} \Big\rangle_{(t_j)_{j=1}^{n-1};(x_j)_{j=1}^{n-1}}^\phi.
\end{gather}
Write simply $\bm{Z}_{\Lambda;\bm{x}} \!=\! \bm{\widetilde{Z}}_{\Lambda \cap B_{\bm{x}};\bm{x}}$ for $\Lambda\!\in\!\mathcal{B}$.
By taking
\begin{align}
F[\phi] = \Big( \prod_{j=2}^{n} \sum_{k=1}^{j-1} \Big(\prod_{l=k}^{j-2}t_l\Big) \Delta_{x_{j};x_1,\dots,x_k}^\phi \Big) 
e^{-\lambda \!\int_{\Lambda}\!dx\,\phi^4_x + \int_{\Lambda}\!dx\,J_x\phi_x}
\end{align}
in \eqref{Partial_tn_F} and performing the integrations over $t_1,\dots,t_n$, we have
\begin{align}\label{identity_of_Z}
\bm{\widetilde{Z}}_{\Lambda;\bm{x}} = 
\bm{Z}_{\Lambda;\bm{x}} Z_{\Lambda\backslash B_{\bm{x}}} + 
\!\int_{ (B_{\bm{x}})^{\mathsf{c}}}\!\!\!dz\, \bm{\widetilde{Z}}_{\Lambda;\bm{x},z},
\end{align}
where the first summand factorizes since both of the Gaussian measure at $t_n\!=\!0$ and the integrand factorize.
Applying \eqref{identity_of_Z} successively, we obtain, for $z_1\!\not\in\!B_{\bm{x}}$,
\begin{align}\label{expansion_of_Z}
Z_{\Lambda\backslash B_{\bm{x}}} &=\bm{\widetilde{Z}}_{\Lambda\backslash B_{\bm{x}};z_1}=\bm{Z}_{\Lambda\backslash B_{\bm{x}};z_1}Z_{\Lambda\backslash B_{\bm{x},z_{1}}}+\!\int_{(B_{z_{1}})^{\mathsf{c}}}\!\!\!dz_{2}\,\bm{\widetilde{Z}}_{\Lambda\backslash B_{\bm{x}};z_{1},z_{2}}=\cdots\notag\\
&=\sum_{m\ge 1} \Big( \prod_{j=1}^{m-1} \int_{(B_{z_{1},\dots,z_{j}})^{\mathsf{c}}}\!\!\!dz_{j+1} \Big)\bm{Z}_{\Lambda\backslash B_{\bm{x}};z_{1},\dots,z_{m}} 
Z_{\Lambda\backslash B_{\bm{x},z_{1},\dots,z_{m}}},
\end{align}
where the summands vanish for $m$ sufficiently large by the boundedness of $\Lambda$
and the ranges of integrations $(B_{z_{1},\dots,z_{j}})^{\mathsf{c}}$ can be  replaced by $(B_{\bm{x},z_{1},\dots,z_{j}})^{\mathsf{c}}$.
Then, dividing both sides of \eqref{expansion_of_Z} by $Z_{\Lambda}$, we have the equation of Kirkwood-Salzburg type \cite{R69}
\begin{gather}
\frac{Z_{\Lambda\backslash B_{\bm{x},z_{1}}}}{Z_{\Lambda}} = \frac{Z_{\Lambda\backslash B_{\bm{x}}}}{Z_{\Lambda}}
- \big( \bm{Z}_{\Lambda\backslash B_{\bm{x}};z_{1}} \!-\! 1 \big)
\frac{Z_{\Lambda\backslash B_{\bm{x},z_{1}}}}{Z_{\Lambda}}  \,-\notag\\
\sum_{m\ge 2} \Big( \prod_{j=1}^{m-1} \int_{(B_{\bm{x},z_{1},\dots,z_{j}})^{\mathsf{c}}}\!\!\!dz_{j+1} \Big) \bm{Z}_{\Lambda\backslash B_{\bm{x}};z_{1},\dots,z_{m}} 
\frac{Z_{\Lambda\backslash B_{\bm{x},z_{1},\dots,z_{m}}}}{Z_{\Lambda}},
\end{gather}
which can be rewritten simply as
$\bm{f}_{\Lambda} \!=\! \bm{e} \!+\! \bm{A}_{\Lambda}\bm{f}_{\Lambda}$.
Here $\bm{e}$ and $\bm{f}_{\Lambda}$ are regarded as Borel measurable functions on $\bigcup_{n\ge 1}\mathcal{X}_n$
with $\bm{e}_{\bm{x}} \!\!=\! \bm{1}_{\bm{x}\in\mathcal{X}_1}$ and
$\bm{f}_{\Lambda;\bm{x}} \!=\! \frac{Z_{\Lambda\backslash B_{\bm{x}}}}{Z_{\Lambda}}$,
while $\bm{A}_{\Lambda}$ is regarded as a linear operator with
\begin{gather}
(\bm{A}_{\Lambda}\bm{f})_{\bm{x},z_1} = \bm{1}_{\bm{x}\not\in\mathcal{X}_0} \bm{f}_{\bm{x}} -
\big( \bm{Z}_{\Lambda\backslash B_{\bm{x}};z_{1}} \!-\! 1 \big) \bm{f}_{\bm{x},z_1}
\,- \notag\\
 \sum_{m\ge 2}\Big( \prod_{j=1}^{m-1} \int_{(B_{\bm{x},z_{1},\dots,z_{j}})^{\mathsf{c}}}\!\!\!dz_{j+1} \Big)
\bm{Z}_{\Lambda\backslash B_{\bm{x}};z_{1},\dots,z_{m}} \bm{f}_{\bm{x},z_{1},\dots,z_{m}}.
\end{gather}
(We say a function $\bm{f}$ on $\bigcup_{n\ge 1}\mathcal{X}_n$ is Borel measurable 
iff $\bm{f}|_{\mathcal{X}_n}$ is Borel measurable for each $n\!\ge\! 1$.)
Denoting
$\bm{Z}_{r,\Lambda;w_1,\dots,w_r} \!=\! \tfrac{\delta^r }{\delta J_{w_1}\!\cdots\delta J_{w_r}} 
\bm{Z}_{\Lambda}\big|_{J=0}$ and
$\bm{f}_{r,\Lambda;w_1,\dots,w_r} \!=\! \tfrac{\delta^r }{\delta J_{w_1}\!\cdots\delta J_{w_r}} \bm{f}_{\Lambda}\big|_{J=0}$,
we have
\begin{gather}
\big( \bm{1} \!-\! \bm{A}_{0,\Lambda} \big) \bm{f}_{r,\Lambda} = \bm{e}\, \delta_{r,0}
+ \sum_{0< s\le r} \bm{A}_{s,\Lambda} \bm{f}_{r-s,\Lambda}.
\end{gather}
Here $\bm{f}_{r,\Lambda}$ is regarded as a Borel measurable function on $\mathbb{R}^{d\times r} \times \bigcup_{n\ge 1}\mathcal{X}_n$,
while $\bm{A}_{s,\Lambda}$ is regarded as a linear operator 
with $\bm{A}_{0,\Lambda} \!=\! \bm{A}_{\Lambda}[0]$ and, for $0 \!<\! s \!\le\! r$,
\begin{gather}
( \bm{A}_{s,\Lambda} \bm{f} )_{w_1,\dots,w_r;\bm{x},z_1} = - 
\sum\nolimits_{\substack{I,I'\subset\{1,\dots,r\}\\|I|=s,I'=\{1,\dots,r\}\backslash I}} \sum_{m\ge 1} \notag\\
\Big( \prod_{j=1}^{m-1} \int_{(B_{\bm{x},z_{1},\dots,z_{j}})^{\mathsf{c}}}\!\!\!dz_{j+1} \Big)
\bm{Z}_{s,\Lambda\backslash B_{\bm{x}};(w_i)_{i\in I};z_{1},\dots,z_{m}} 
\bm{f}_{(w_i)_{i\in I'};\bm{x},z_{1},\dots,z_{m}}.
\end{gather}
Now letting $\bm{f}^*_{r}$ and $\bm{A}_{r}$ be the formal limits of $\bm{f}_{r,\Lambda}$ and $\bm{A}_{r,\Lambda}$ respectively as $\Lambda \!\to\! \mathbb{R}^d$,
we have that
\begin{gather}\label{equations_of_f}
\big( \bm{1} \!-\! \bm{A}_{0} \big) \bm{f}^*_{r} = \bm{e} \,\delta_{r,0} + \sum_{0<s\le r} \bm{A}_{s} \bm{f}^*_{r-s}
\end{gather}
with 
\begin{align}
( \bm{A}_{0} \bm{f} )_{\bm{x},z_1} = \bm{1}_{\bm{x}\not\in\mathcal{X}_0} \bm{f}_{\bm{x}} - 
\big( \bm{Z}_{0,(B_{\bm{x}})^{\mathsf{c}};z_{1}} \!-\! 1 \big) \bm{f}_{\bm{x},z_1}
- \sum_{m\ge 2} \notag\\
\Big( \prod_{j=1}^{m-1} \int_{(B_{\bm{x},z_{1},\dots,z_{j}})^{\mathsf{c}}}\!\!\!dz_{j+1} \Big)
\bm{Z}_{0,(B_{\bm{x}})^{\mathsf{c}};z_{1},\dots,z_{m}} \bm{f}_{\bm{x},z_{1},\dots,z_{m}}
\end{align}
and, for $0 \!<\! s \!\le\! r$,
\begin{gather}
( \bm{A}_{s} \bm{f} )_{w_1,\dots,w_r;\bm{x},z_1} = - 
\sum\nolimits_{\substack{I,I'\subset\{1,\dots,r\}\\|I|=s,I'=\{1,\dots,r\}\backslash I}} \sum_{m\ge 1} \notag\\
\Big( \prod_{j=1}^{m-1} \int_{(B_{\bm{x},z_{1},\dots,z_{j}})^{\mathsf{c}}}\!\!\!dz_{j+1} \Big)
\bm{Z}_{s,(B_{\bm{x}})^{\mathsf{c}};(w_i)_{i\in I};z_{1},\dots,z_{m}}
\bm{f}_{(w_i)_{i\in I'};\bm{x},z_{1},\dots,z_{m}}.
\end{gather}

To exhibit the exponential decay of the connected Schwinger functions, 
we review the following two tree-lengths for $x_1,\dots,x_n\!\in\!\mathbb{R}^d$ \cite{DIS73}.
One is
\begin{align}
\ell'_{x_1,\dots,x_n} = \min_{T}\sum_{(i,j)\in T} |x_i\!-\!x_j|,
\end{align}
where the minimum is taken over all trees $T$ connecting $x_1,\dots,x_n$. 
The other is
$\ell_{x_1,\dots,x_n} \!=\! \inf
 \ell'_{x_1,\dots,x_n, y_1,\dots,y_m}$,
where the infimum is taken over  all finite subsets $\{y_1,\dots,y_m\}$ of $\mathbb{R}^d$ (including $\varnothing$).
In particular, $\ell'_{x_1}\!=\!\ell_{x_1}\!=\!0$ and $\ell'_{x_1,x_2}\!=\!\ell_{x_1,x_2}\!=\!|x_1\!-\!x_2|$.
It is easy to see that $\ell'_{x_1,\dots,x_n}$ and $\ell_{x_1,\dots,x_n}$ are symmetric in $x_1,\dots,x_n$.
Also we have the inequality $\tfrac12\ell_{x_1,\dots,x_n}\le \ell'_{x_1,\dots,x_n} \le \ell_{x_1,\dots,x_n}$ \cite{DIS73},
indicating the equivalence of $\ell'_{x_1,\dots,x_n}$ and $\ell_{x_1,\dots,x_n}$ in some sense.
We can also define these two tree-lengths for nonempty subsets $X_1,\dots,X_n\!\subset\!\mathbb{R}^d$ as
\begin{align}
\ell'_{X_1,\dots,X_n} = \min_{T}\sum_{(i,j)\in T} \inf_{x\in X_i, x'\in X_j}|x\!-\!x'|
\end{align}
and $\ell_{X_1,\dots,X_n}\!=\! \inf
 \ell'_{X_1,\dots,X_n, \{y_1\},\dots,\{y_m\}}$ with the infimum also taken over all finite subsets $\{y_1,\dots,y_m\}$ of $\mathbb{R}^d$.
Writing simply $\ell_{x_1,\dots,x_n;y_1,\dots,y_m}\!=\!\ell_{\{x_1\},\dots,\{x_n\},\{y_1,\dots,y_m\}}$,
we list the following useful inequalities for $\ell$ and leave the proofs of them to the reader:
\begin{gather}
\ell_{x_1,\dots,x_n}\le\ell_{x_1,\dots,x_n,\dots,x_{n+n'}}\le \ell_{x_1,\dots,x_n}+\ell_{x_n,\dots,x_{n+n'}},
\label{inequalities1_of_ell}\\
\ell_{x_1,\dots,x_n;y_1,\dots,y_m}\ge\ell_{x_1,\dots,x_n;y_1,\dots,y_m,\dots,y_{m+m'}}
\ge\ell_{x_1,\dots,x_n;y_1,\dots,y_m}-\ell_{y_m,\dots,y_{m+m'}}.\label{inequalities2_of_ell}
\end{gather}

For $r\!\ge\! 0$, let $\mathcal{F}_{r}$ be the Banach space of 
Borel measurable functions on $\mathbb{R}^{dr} \times \bigcup_{n\ge 1}\mathcal{X}_n$
with the norm
\begin{align}
\|\bm{f}\|_{r} = \tfrac{1}{r!} \sup_{n\ge 1} 
\esssup_{\bm{w}\in \mathbb{R}^{dr},\bm{x}\in \mathcal{X}_n} 2^{1-n} 
e^{\ell_{\bm{w};\bm{x}}} |\bm{f}_{\bm{w};\bm{x}}|,
\end{align}
where $\ell_{\bm{w};\bm{x}}$ is used to show the exponential decay of $\bm{f}_{\bm{w};\bm{x}}$ in $\bm{w}$.
In particular, 
\begin{align}
\|\bm{f}\|_{0} = \sup_{n\ge 1} \esssup_{\bm{x}\in \mathcal{X}_n} 2^{1-n}|\bm{f}_{\bm{x}}|.
\end{align}
Also let
$\|\bm{A}\|_{s\to r} = \sup_{\|\bm{f}\|_{s}=1}
\|\bm{A}\bm{f}\|_{r}$
for linear  operator $\bm{A}\!: \mathcal{F}_{s} \!\to\! \mathcal{F}_{r}$.

\begin{theorem}
For $\lambda \!>\! 0$ sufficiently small, we have
$\|\bm{A}_{0}\|_{r\to r} \!\le\! \tfrac34$ and, for $0 \!<\! s \!\le\! r$,
$\|\bm{A}_{s}\|_{r-s\to r} \!\le\! c^s$.
\end{theorem}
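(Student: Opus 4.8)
The plan is to estimate the operators $\bm{A}_0$ and $\bm{A}_s$ term by term in the sum over $m$, controlling each term by: (i) a uniform bound on the ``amplitudes'' $\bm{Z}_{s,(B_{\bm{x}})^{\mathsf c};\,z_1,\dots,z_m}$ (with $s$ external legs attached), which are built out of the tree-graph formula through the operators $\Delta^\phi$; (ii) the spatial decay coming from the covariance factors $C_{z_i,z_{i+1}}$ hidden inside each $\Delta^\phi$; and (iii) the volume factors $2^{1-n}$ that are paid off by the decay. First I would record the stability estimate: since $-\lambda\int\phi^4 \le 0$, a crude bound on each $\bm{Z}$-amplitude is obtained by expanding the $m-1$ operators $\Delta^\phi_{z_j;\dots}$ via Wick's theorem over the Gaussian measure with covariance bounded by $c_1 e^{-2|x-y|}$; combined with the functional-derivative pairings hitting the interaction $e^{-\lambda\int\phi^4}$, Fa\`a di Bruno / forest-formula combinatorics give a bound of the form $\bm{Z}_{0,\cdot;z_1,\dots,z_m} \le K^m\, m!\,\lambda^{?}\prod e^{-|z_i-z_{i+1}|}$ after pulling out one covariance line per new vertex — here the integrations $\int_{(B_{\bm{x},z_1,\dots,z_j})^{\mathsf c}} dz_{j+1}$ are rendered finite by $e^{-|z_j-z_{j+1}|}$, each contributing a fixed constant and leaving a residual exponential to feed the tree-length in the target norm.

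The core estimate is then the interplay between the weight $2^{1-n}$ and the tree-length weight $e^{\ell_{\bm w;\bm x}}$. When $\bm{A}_0$ (or $\bm{A}_s$) maps $\bm{f}$ on $\mathcal{X}_{n+1}$-type arguments $(\bm{x},z_1)$ down to $\bm{f}$ on $(\bm{x},z_1,\dots,z_m)$, the string $z_1,\dots,z_m$ of newly created points must be ``paid for'': the factor $2^{1-(n+m)}$ on the image side versus $2^{1-n}$ on the source side costs $2^{-m}$, each of which is absorbed by one of the $m$ decaying covariance lines (choose the constant $c_1$-line small by choosing $\lambda$ small where a $\lambda$ prefactor is available, or absorb a fixed geometric $2^{-m}$ into $K^m$). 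Simultaneously, the tree-length inequalities \eqref{inequalities1_of_ell}--\eqref{inequalities2_of_ell} let me compare $e^{\ell_{\bm w;\bm x,z_1,\dots,z_m}}$ with $e^{\ell_{\bm w;\bm x,z_1}}$: by \eqref{inequalities2_of_ell}, adding the points $z_2,\dots,z_m$ only decreases $\ell$, so the target weight is bounded by $e^{\ell_{\bm w;\bm x,z_1}} \cdot e^{\ell_{z_1,\dots,z_m}} \le e^{\ell_{\bm w;\bm x,z_1}}\prod_{i=1}^{m-1} e^{|z_i-z_{i+1}|}$, and these growing exponentials are exactly cancelled by the decay of the covariance lines provided I keep a little room (use $e^{-2|x-y|}$, spend $e^{+|x-y|}$ on the tree-length and keep $e^{-|x-y|}$ for the integration). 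For $\bm{A}_s$ with $s>0$ external points $(w_i)_{i\in I}$ inserted into the amplitude, the split $I\sqcup I' = \{1,\dots,r\}$ contributes a binomial factor $\binom{r}{s}$ which is exactly what the $\tfrac1{r!}$ normalization in $\|\cdot\|_r$ is designed to swallow: $\tfrac1{r!}\binom{r}{s}\cdot s! \cdot (r-s)! = 1$, so no factorial is lost, and the extra $s$ derivatives $\delta/\delta J$ each bring down one field which, when Wick-contracted, yields one more covariance line carrying additional decay in the $w_i$ — this is what produces $e^{-\ell_{\bm w;\dots}}$ and the overall $c^s$ with $c$ a fixed constant (no small-$\lambda$ needed once $s\ge1$, since each external leg is accompanied by at least one propagator one can afford to bound crudely).

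The main obstacle I expect is the combinatorial bookkeeping of (ii): showing that the full amplitude $\bm{Z}_{s,(B_{\bm{x}})^{\mathsf c};\,z_1,\dots,z_m}$ — which is itself a sum over ordered trees $\eta\in\mathcal{T}_m$ with nested $t_l$-products, acted on by Gaussian integration against $e^{-\lambda\int\phi^4}$ — genuinely factorizes enough decay to supply one honest $e^{-(2-\epsilon)|z_j-z_{j+1}|}$ per edge of $\eta$, uniformly over the nesting and over $\bm{x}$, without the $m!$ from $|\mathcal{T}_m|$ overwhelming the $\tfrac1{m!}$-type gains. The resolution is that the tree structure of $\eta$ is precisely matched to a \emph{tree} of covariance lines (each $\Delta^\phi_{z_j;z_1,\dots,z_{\eta(j)}}$ creates a line from $z_j$ into the cluster $B_{z_1,\dots,z_{\eta(j)}}$), so $|\mathcal{T}_m|\le m^{m}$-type growth is beaten by the product of line decays after summing the geometric series in each integration variable — a standard ``tree-graph bound meets Cayley'' argument — together with the observation that at small $\lambda$ the $\phi^4$-vertex insertions each carry a $\lambda$ and a local integration that stays bounded. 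Once these amplitude bounds are in hand, assembling $\|\bm{A}_0\|_{r\to r}\le \tfrac34$ is just choosing $\lambda$ small so that the convergent series $\sum_{m\ge1}(\text{const}\cdot\lambda)^{?}(\text{const})^m 2^{-m}$ is $\le\tfrac34$, and $\|\bm{A}_s\|_{r-s\to r}\le c^s$ follows from the same series with the $s$-dependence isolated in the $\binom{r}{s}$-compensated external-leg factor $c^s$.
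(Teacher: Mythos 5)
Your architecture is the same as the paper's: estimate each $m$-term of $\bm{A}_0$ and $\bm{A}_s$ by a uniform amplitude bound, split the covariance decay $e^{-2|x-y|}$ into one half feeding the tree-length weight and one half controlling the spatial integrations, and let the $\tfrac{1}{r!}$ normalization swallow the $\binom{r}{s}$ from the $I\sqcup I'$ sum. But there is a genuine gap precisely at the point you flag as the ``main obstacle,'' and your proposed resolution does not close it. Your amplitude bound carries a factor $m!$, and that factor is real: $|\mathcal{T}_m|=(m-1)!$, and on top of that each ordered tree produces the local factorials $\prod_j d_{\eta}(j)!$ when several derivative lines land in the same shell $B'_{z_1,\dots,z_j}$. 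The spatial integrations $\int dz_{j+1}\,e^{-|z_j-z_{j+1}|}$ each contribute a fixed constant, \emph{not} a $1/j$, so ``summing the geometric series in each integration variable'' yields only $K^{m}$ and leaves you with $\sum_m m!\,(K\lambda)^{m-1}$, which diverges for every $\lambda>0$. The mechanism that actually removes the factorial is the integral over the interpolation parameters $t_1,\dots,t_{m-1}$ together with the nested products $\prod_{l=\eta(j)}^{j-2}t_l$: the paper's Lemma 3 (Speer's identity) states
\begin{align}
\sum_{\eta\in\mathcal{T}_m}\Big(\prod_{j=1}^{m-1}d_{\eta}(j)!\Big)\Big(\prod_{j=1}^{m-1}\!\int_0^1\!\!dt_j\Big)\prod_{j=2}^m \prod_{l=\eta(j)}^{j-2}t_l=\frac{1}{m}\binom{2m\!-\!2}{m\!-\!1}\le 4^{m-1},
\end{align}
so the sum over ordered trees weighted by the $t$-integrals is purely geometric. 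Treating the $t$-integrals as harmless factors $\le 1$ (which is implicitly what your sketch does) destroys exactly the cancellation the proof lives on; supplying this identity, or an equivalent forest-formula symmetrization, is the one nontrivial combinatorial input you are missing.

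Two secondary corrections to your bookkeeping. First, the $2^{1-n}$ weights run the other way: $\bm{A}$ evaluates $\bm{f}$ on $\mathcal{X}_{n+m}$ and produces a function on $\mathcal{X}_{n+1}$, so the net factor is $2^{m-1}\ge 1$ --- a \emph{cost}, not a gain of $2^{-m}$ --- which must be (and is) beaten by the $(\mathrm{const}\cdot\lambda)^{m-1}$ extracted from the amplitude. Second, the bound $\tfrac34$ for $\bm{A}_0$ is not obtained by making a $\lambda$-series small: the term $\bm{1}_{\bm{x}\not\in\mathcal{X}_0}\bm{f}_{\bm{x}}$ in the definition of $\bm{A}_0$ contributes exactly $\tfrac12$ independently of $\lambda$ (it is the piece with no amplitude attached), and only the remaining terms are $O(\lambda)$; your final assembly, which sums only over $m\ge 1$ amplitude terms, omits this $\tfrac12$ and therefore cannot explain why the answer is $\tfrac34$ rather than $O(\lambda)$.
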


By Theorem 1, \eqref{equations_of_f} has a unique solution $(\bm{f}^*_r)_{r\ge 0}$ with
\begin{align}
\bm{f}^*_{r} = \big( \bm{1} \!-\! \bm{A}_{0} \big)^{-1}\big(\bm{e} \,\delta_{r,0} + \sum_{0<s\le r} \bm{A}_{s} \bm{f}^*_{r-s}\big)\in\mathcal{F}_{r}
\end{align}
for $\lambda \!>\! 0$ sufficiently small.
Since $\|\bm{e}\|_{0} \!=\! 1$, we have
\begin{align}
\|\bm{f}^*_{r}\|_{r} &\le 4 \Big( \delta_{r,0} + 
\sum_{0< s\le r} c^s \|\bm{f}^*_{r-s}\|_{r-s}\Big)
\end{align}
and then can obtain inductively 
$\|\bm{f}^*_{r}\|_{r} \!\le\! 4(5c)^r$. 

By \eqref{expansion_of_Z}, we have
\begin{align}
\frac{\delta }{\delta J_{w_1}}Z_{\Lambda} &= \frac{\delta }{\delta J_{w_1}}\sum_{n\ge 1} \Big( \prod_{j=1}^{n-1} \int_{(B_{x_{1},\dots,x_{j}})^{\mathsf{c}}}\!\!\!dz_{j+1} \Big)\bm{Z}_{\Lambda;x_{1},\dots,x_{n}} Z_{\Lambda\backslash B_{x_{1},\dots,x_{n}}}\Big|_{x_1=w_1}\notag\\
&= \sum_{n\ge 1} \Big( \prod_{j=1}^{n-1} \int_{(B_{x_{1},\dots,x_{j}})^{\mathsf{c}}}\!\!\!dz_{j+1} \Big)Z_{\Lambda\backslash B_{x_{1},\dots,x_{n}}}\frac{\delta }{\delta J_{w_1}}\bm{Z}_{\Lambda;x_{1},\dots,x_{n}} \Big|_{x_1=w_1},
\end{align}
where we have used the fact $\tfrac{\delta }{\delta J_{w_1}}Z_{\Lambda\backslash B_{x_{1},\dots,x_{n}}}\big|_{x_1=w_1}\!\equiv\!0$.
Then, for $w_1,\dots,w_r \!\in\! \mathbb{R}^d$,
\begin{align}
S^c_{r,\Lambda;w_1,\dots,w_r} &= \frac{\delta^{r-1} }{\delta J_{w_2}\!\cdots\delta J_{w_r}} 
\Big( \frac{1}{Z_{\Lambda}} \frac{\delta }{\delta J_{w_1}} Z_{\Lambda} \Big) 
\Big|_{J=0} \notag\\
&= \sum_{n\ge 1} \Big( \prod_{j=1}^{n-1} \int_{(B_{x_1,\dots,x_{j}})^{\mathsf{c}}}\!\!\!dx_{j+1} \Big)
\frac{\delta^{r-1} }{\delta J_{w_2}\!\cdots\delta J_{w_r}} \notag\\
&\quad\, \Big( \bm{f}_{\Lambda;x_1,\dots,x_n} \frac{\delta}{\delta J_{w_1}} \bm{Z}_{\Lambda;x_1,\dots,x_n} \Big)
\Big|_{J=0,x_1=w_1} \notag\\
&= \sum\nolimits_{\substack{I,I'\subset\{1,\dots,r\}\\1\in I,I'=\{1,\dots,r\}\backslash I}}
\sum_{n\ge 1} \Big( \prod_{j=1}^{n-1} \int_{(B_{x_1,\dots,x_{j}})^{\mathsf{c}}}\!\!\!dx_{j+1} \Big) \notag\\
&\quad\; \bm{Z}_{|I|,\Lambda;(w_i)_{i\in I};x_1,\dots,x_n} \bm{f}_{|I'|,\Lambda;(w_i)_{i\in I'};x_1,\dots,x_n}
\Big|_{x_1=w_1},
\end{align}
which can be rewritten as $S^c_{r,\Lambda} = \sum_{1\le s\le r} \bm{T}_{s,\Lambda} \bm{f}_{r-s,\Lambda}$ with
\begin{gather}
( \bm{T}_{s,\Lambda} \bm{f} )_{w_1,\dots,w_r} = 
\sum\nolimits_{\substack{I,I'\subset\{1,\dots,r\}\\1\in I,|I|=s,I'=\{1,\dots,r\}\backslash I}} \sum_{n\ge 1} \notag\\
\Big( \prod_{j=1}^{n-1} \int_{(B_{x_1,\dots,x_{j}})^{\mathsf{c}}}\!\!\!dx_{j+1} \Big) 
\bm{Z}_{s,\Lambda;(w_i)_{i\in I};x_1,\dots,x_n} \bm{f}_{(w_i)_{i\in I'};x_1,\dots,x_n} \Big|_{x_1=w_1}.
\end{gather}
The formal limit of $S^c_{r,\Lambda}$ as $\Lambda \!\to\! \mathbb{R}^d$ is 
$S^c_{r} = \sum_{1\le s\le r} \bm{T}_{s} \bm{f}^*_{r-s}$ with
\begin{gather}
( \bm{T}_{s} \bm{f} )_{w_1,\dots,w_r} = 
\sum\nolimits_{\substack{I,I'\subset\{1,\dots,r\}\\1\in I,|I|=s,I'=\{1,\dots,r\}\backslash I}} \sum_{n\ge 1} \notag\\
\Big( \prod_{j=1}^{n-1} \int_{(B_{x_1,\dots,x_{j}})^{\mathsf{c}}}\!\!\!dx_{j+1} \Big)
\bm{Z}_{s,\mathbb{R}^d;(w_i)_{i\in I};x_1,\dots,x_n} \bm{f}_{(w_i)_{i\in I'};x_1,\dots,x_n} \Big|_{x_1=w_1}.
\end{gather}
For $r \!\ge\! 1$, let $\mathcal{F}'_{r}$ be the Banach space of 
Borel measurable functions $f\!: \mathbb{R}^{d\times r} \!\to\! \mathbb{R}$ with the norm 
\begin{align}
\|f\|'_{r} = \tfrac{1}{r!} \esssup_{\bm{w}\in \mathbb{R}^{d\times r}} 
e^{ \frac12 \ell_{\bm{w}}} |f_{\bm{w}}|,
\end{align}
where $\ell_{\bm{w}}$ is used to show the exponential decay of $\bm{f}_{\bm{w}}$ in $\bm{w}$.
Also let
\begin{align}
\|\bm{T}\|'_{s\to r} = \sup_{\|\bm{f}\|_{s}=1}
\| \bm{T} \bm{f} \|'_{r}.
\end{align}
for linear  operator $\bm{T}\!: \mathcal{F}_{s} \!\to\! \mathcal{F}'_{r}$.

\begin{theorem}
For $\lambda \!>\! 0$ sufficiently small and $1 \!\le\! s \!\le\! r$, 
$\|\bm{T}_{s}\|'_{r-s\to r} \le c^s$.
\end{theorem}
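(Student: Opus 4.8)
The plan is to bound $\bm{T}_s$ by repeating, essentially verbatim, the chain of estimates that proves Theorem~1 for the operators $\bm{A}_s$: the ordered‑tree expansion of $\bm{Z}$, the stability bound for $e^{-\lambda\int\phi^4}$, the exponential decay $0\le(C_{\bm t;\bm x})_{x,y}\le c_1 e^{-2|x-y|}$, and the convergence of the successive cluster integrations. Indeed $\bm{T}_s$ has the same shape as $\bm{A}_s$ and is assembled from the same quantities $\bm{Z}_{s,\Lambda;(w_i)_{i\in I};y_1,\dots,y_m}$; the only genuinely new features are that the subset $I$ is now constrained by $1\in I$ while the first cluster variable is frozen at $x_1=w_1$ (one fewer integration, and the subset sum has $\binom{r-1}{s-1}$ terms), and that the target space is $\mathcal{F}'_r$ instead of $\mathcal{F}_r$, so that the exponential weight to be produced is $e^{\frac12\ell_{w_1,\dots,w_r}}$, carried by the $w$'s alone. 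The key input I would quote from the proof of Theorem~1 is an estimate of the form
\[
\bigl|\bm{Z}_{s,\Lambda;(w_i)_{i\in I};y_1,\dots,y_m}\bigr|\ \le\ s!\,c_2^{\,s}\,\kappa(\lambda)^{\,m-1}\Bigl(\prod_{i\in I}\chi_{B_{y_1,\dots,y_m}}(w_i)\Bigr)D(y_1,\dots,y_m),
\]
uniform in $\Lambda\in\mathcal{B}$, where $\kappa(\lambda)\to0$ as $\lambda\to0^{+}$ and the tree amplitude $D\ge0$ (a sum over $\eta\in\mathcal{T}_m$, weighted by the $t$‑integrals, of products of $\int_{B'_{y_1,\dots,y_{\eta(j)}}}dz\,c_1 e^{-2|y_j-z|}$) satisfies $\bigl(\prod_{j=1}^{m-1}\int_{(B_{y_1,\dots,y_j})^{\mathsf{c}}}dy_{j+1}\bigr)e^{\frac12\ell'_{y_1,\dots,y_m}}D(y_1,\dots,y_m)\le C^{\,m-1}$ for a constant $C$ independent of $y_1$ (by translation invariance). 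The factor $\prod_i\chi_{B_{\bm y}}(w_i)$ arises because $\tfrac{\delta}{\delta J_{w_i}}$ acts on $e^{\int_{B_{\bm y}}J\phi}$ inside $\bm{Z}_{\Lambda;\bm y}=\bm{\widetilde{Z}}_{\Lambda\cap B_{\bm y};\bm y}$, and the integration bound uses, for each fixed $\eta$, that $\ell'_{y_1,\dots,y_m}\le\sum_j|y_j-y_{\eta(j)}|$ together with $c_1 e^{-2|y_j-z|}\le c_1 e^{2}e^{-2|y_j-y_{\eta(j)}|}$ for $z\in B'_{y_1,\dots,y_{\eta(j)}}$, so that $e^{\frac12\ell'}D$ still carries a product $\prod_j e^{-|y_j-y_{\eta(j)}|}$ of integrable weights.

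With this in hand I would expand $|(\bm{T}_s\bm{f})_{w_1,\dots,w_r}|$ by the triangle inequality, use $|\bm{f}_{(w_i)_{i\in I'};x_1,\dots,x_n}|\le(r-s)!\,\|\bm{f}\|_{r-s}\,2^{n-1}e^{-\ell_{(w_i)_{i\in I'};x_1,\dots,x_n}}$, insert the estimate on $\bm{Z}_{s,\mathbb{R}^d;\dots}$, and carry out the integrations over $x_2,\dots,x_n$. The one new point needed to match the norm $\|\cdot\|'_r$ is a tree‑surgery inequality valid on the support of $\prod_{i\in I}\chi_{B_{\bm x}}(w_i)$: there every $w_i$ with $i\in I$ lies within distance $1$ of the cluster $\{x_1,\dots,x_n\}$ and $w_1=x_1$ belongs to it, so splicing a spanning tree of the cluster, the short edges $w_i\to(\text{nearest }x_j)$ for $i\in I\backslash\{1\}$, and a tree for $\ell_{(w_i)_{i\in I'};x_1,\dots,x_n}$ yields
\[
\ell_{w_1,\dots,w_r}\ \le\ \ell'_{x_1,\dots,x_n}+(s-1)+\ell_{(w_i)_{i\in I'};x_1,\dots,x_n},
\]
a consequence of the definitions of $\ell'$ and $\ell$ analogous to \eqref{inequalities1_of_ell}--\eqref{inequalities2_of_ell}. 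Multiplying by $e^{\frac12\ell_{w_1,\dots,w_r}}$, the factor $e^{\frac12\ell_{(w_i)_{i\in I'};\bm x}}$ is absorbed by half of the decay of $\bm{f}$, the factor $e^{\frac12\ell'_{\bm x}}$ is absorbed into $D$ through the integration bound above, and only $e^{(s-1)/2}$ remains.

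Putting the pieces together leaves $e^{\frac12\ell_{w_1,\dots,w_r}}\,|(\bm{T}_s\bm{f})_{w_1,\dots,w_r}|\le(r-s)!\,s!\,\binom{r-1}{s-1}\,c_2^{\,s}e^{(s-1)/2}\,\|\bm{f}\|_{r-s}\sum_{n\ge1}(2\,\kappa(\lambda)\,C)^{n-1}$. The factorials cancel cleanly, $(r-s)!\,s!\,\binom{r-1}{s-1}=s\,(r-1)!\le r!$, against the prefactor $\tfrac1{r!}$ in $\|\cdot\|'_r$; and once $\lambda$ is small enough that $2\,\kappa(\lambda)\,C<\tfrac12$ the geometric series is $\le2$, whence $\|\bm{T}_s\|'_{r-s\to r}\le c^{\,s}$ with, e.g., $c=2c_2 e^{1/2}$, uniformly in $r$. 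I expect the only step carrying real content to be the estimate on $\bm{Z}_{s,\Lambda;\dots}$ together with the convergence of the cluster integrations behind it — but this is precisely what the proof of Theorem~1 must already establish for $\bm{A}_s$, so it can be reused; the part specific to Theorem~2, namely the tree‑surgery inequality and the bookkeeping of the half‑weights, is routine.
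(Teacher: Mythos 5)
Your proposal is correct and follows essentially the same route as the paper: it reuses the integrated bound on $\bm{Z}_{s,\Lambda;(w_i)_{i\in I};x_1,\dots,x_n}$ established for Theorem~1 (the paper's \eqref{integration_of_Z}), applies the same tree-splicing inequality $\ell_{w_1,\dots,w_r}\le\ell_{(w_i)_{i\in I'};\bm{x}}+\ell_{\bm{x}}+s-1$ on the support where $x_1=w_1$ and $\{w_i\}_{i\in I}\subset B_{\bm{x}}$, and cancels the factorials via $\binom{r-1}{s-1}s!(r-s)!=s(r-1)!\le r!$. The only cosmetic discrepancy is that your quoted bound on $\bm{Z}_s$ with a uniform $\kappa(\lambda)^{m-1}$, $\kappa\to0$, should carry the extra factor $\max\{(4c_3c_4c_5\lambda)^{-\min\{s,m-1\}},1\}$ present in \eqref{integration_of_Z} (derivatives landing on source fields cost no $\lambda$), but since that factor is absorbed into $c^s$ the conclusion is unaffected.
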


By Theorem 2, we have, for $\lambda \!>\! 0$ sufficiently small,
\begin{align}
\|S^c_{r}\|'_{r} &\le \sum_{1\le s\le r} 
\|\bm{T}_{s}\|'_{r-s\to r} \|\bm{f}_{r-s}^*\|_{r-s}
\le \sum_{1\le s\le r} c^s \cdot 4(5c)^{r-s} \le (5c)^r,
\end{align}
which is equivalent to
\begin{align}
|S^c_{r;w_1,\dots,w_r}| \le  (5c)^r r!\, e^{-\ell_{w_1,\dots,w_r}},\; w_1,\dots,w_r\!\in\!\mathbb{R}^d.
\end{align}

\section{Estimation}

Let $\mathcal{H}_n$ be the set of all sequences $(h_j)_{j=1}^n$ with $h_j\!\in\! \mathbb{N}$. 
For $(h_j)_{j=1}^n,(h'_j)_{j=1}^n \!\in\! \mathcal{H}_n$, we write $(h_j)_{j=1}^n\!\le\!(h'_j)_{j=1}^n$ iff $h_j\!\le\!h'_j$ for $1\!\le\! j\!\le\! n$ and $(h_j)_{j=1}^n\!\pm\!(h'_j)_{j=1}^n\!:=\!(h_j\!\pm\!h'_j)_{j=1}^n$. 
We define, for $\bm{x}\!=\!(x_j)_{j=1}^n\!\in\!\mathcal{X}_{n}$ and $\bm{h}\!=\!(h_j)_{j=1}^n\!\in\! \mathcal{H}_n$,
\begin{align}
\|F\|_{\bm{x};\bm{h}}^\phi=
\Big(\prod_{j=1}^n\frac{1}{h_j!}\prod_{k=1}^{h_j}\int_{B'_{x_1,\dots,x_j}}\!\!\!dy_{j,k}\Big)\bigg|\Big(\prod_{j=1}^n\prod_{k=1}^{h_j}\frac{\delta}{\delta\phi_{y_{j,k}}}\Big) F\bigg|.
\end{align}
Then we have
\begin{align}\label{phi_norm_1}
&\quad\,\,\|F_1F_2\|_{\bm{x};\bm{h}}^\phi=
\Big(\prod_{j=1}^n\frac{1}{h_j!}\prod_{k=1}^{h_j}\int_{B'_{x_1,\dots,x_j}}\!\!\!dy_{j,k}\Big)\bigg|\Big(\prod_{j=1}^n\prod_{k=1}^{h_j}\frac{\delta}{\delta\phi_{y_{j,k}}}\Big) F_1 F_2\bigg|\notag\\
&\le\sum_{\substack{h'_1,\dots,h'_n\in\mathbb{N}\\0\le h'_j\le h_j}}\!\!
\Big(\prod_{j=1}^n\frac{1}{h'_j!(h_j\!-\!h'_j)!}\prod_{k=1}^{h_j}\int_{B'_{x_1,\dots,x_j}}\!\!\!dy_{j,k}\Big)\bigg|\Big(\prod_{j=1}^n\prod_{k=1}^{h'_j}\frac{\delta}{\delta\phi_{y_{j,k}}}\Big) F_1 \bigg|\notag\\
&\quad\,\,\bigg|\Big(\prod_{j=1}^n\prod_{k=h'_j+1}^{h_j}\frac{\delta}{\delta\phi_{y_{j,k}}}\Big)  F_2\bigg|
=\sum_{\bm{h}'\in \mathcal{H}_n:\,\bm{h}'\le\bm{h}}\|F_1\|_{\bm{x};\bm{h}'}^\phi\|F_2\|_{\bm{x};\bm{h}-\bm{h}'}^\phi
\end{align}
and in general
\begin{align}\label{phi_norm_2}
\|F_1\cdots F_r\|_{\bm{x};\bm{h}}^\phi\le
\sum_{\substack{\bm{h}_1,\dots,\bm{h}_r\in \mathcal{H}_n\\\bm{h}_1+\cdots+\bm{h}_r=\bm{h}}}\|F_1\|_{\bm{x};\bm{h}_1}^\phi\!\cdots\|F_r\|_{\bm{x};\bm{h}_r}^\phi.
\end{align}
In order to deal with some singular functional conveniently in the following, 
we regard them as signed measures on $\mathbb{R}^d$. 
(For signed measure $\nu$, we have the Jordan decomposition $\nu\!=\nu^{+}\!-\!\nu^{-}$, 
where $\nu^{+}$, $\nu^{-}$ are the positive and negative variations of $\nu$, 
and we have the total variation of $\nu$ as $|\nu|\!=\!\nu^{+}\!+\!\nu^{-}$. 
Furthermore, for signed measures $\nu_1$, $\nu_2$, we write $\nu_1 \!\ge\! \nu_2$ iff $\nu_1 \!-\! \nu_2$ is a positive measure.
In particular, we have $|\delta_{z}(x)|\!=\! \delta_{z}(x)\!\ge\!0$.)
 
Then it is easy to show that
$\sum_{\bm{h}\in \mathcal{H}_n}\!\|\phi_{z}\|_{\bm{x};\bm{h}}^\phi\!\le\!
1\!+\!|\phi_z|$ and, by \eqref{phi_norm_2},
\begin{align}\label{phi_norm_of_pruduct}
\sum_{\bm{h}\in \mathcal{H}_n}\!\Big\|\prod_{k=1}^r\phi_{z_k}\Big\|_{\bm{x};\bm{h}}^\phi
\le\sum_{\bm{h}\in \mathcal{H}_n}\sum_{\substack{\bm{h}_1,\dots,\bm{h}_r\in\mathcal{H}_n\\\bm{h}_1+\cdots+\bm{h}_r=\bm{h}}}\prod_{k=1}^r\|\phi_{z_k}\|_{\bm{x};\bm{h}_k}^\phi\notag\\
=\prod_{k=1}^r\sum_{\bm{h}_k\in \mathcal{H}_n}\!\!\|\phi_{z_k}\|_{\bm{x};\bm{h}_k}^\phi
\le\prod_{k=1}^r\big(1\!+\!|\phi_{z_k}|\big).
\end{align}
Also we have the following bound:

\begin{lemma}
For $\bm{x}\!\in\!\mathcal{X}_{n}$, $\bm{h}\!\in\! \mathcal{H}_n$ and $\lambda\!\ge\!0$,
\begin{align}
\Big\|e^{-\lambda\!\int_\Lambda\!dx\,\phi^4_x}\Big\|_{\bm{x};\bm{h}}^\phi\le e^{c_2\lambda n}.
\end{align}
\end{lemma}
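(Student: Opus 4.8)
The plan is to expand all the functional derivatives of $e^{-\lambda V}$, where $V=\int_\Lambda\!dx\,\phi^4_x\ge0$, by the exponential (Fa\`{a} di Bruno) formula, then exploit the disjointness of the regions $B'_{x_1,\dots,x_j}$ to factorize the whole estimate over the levels $j=1,\dots,n$, and finally absorb the field-dependent contributions into the damping factor $e^{-\lambda V}$ one level at a time.

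First I would fix a smooth $\phi$ and write $\big(\prod_{(j,k)}\tfrac{\delta}{\delta\phi_{y_{j,k}}}\big)e^{-\lambda V}=e^{-\lambda V}\sum_{\pi}\prod_{S\in\pi}\big(-\lambda\,\tfrac{\delta^{|S|}V}{\prod_{(j,k)\in S}\delta\phi_{y_{j,k}}}\big)$, the sum running over all set partitions $\pi$ of the index set $\{(j,k):1\le j\le n,\ 1\le k\le h_j\}$. Since $\tfrac{\delta^mV}{\delta\phi_{y_1}\cdots\delta\phi_{y_m}}=\tfrac{4!}{(4-m)!}\chi_\Lambda(y_1)\phi_{y_1}^{4-m}\prod_{a\ge2}\delta(y_1-y_a)$ for $m\le4$ and vanishes for $m\ge5$, only partitions into blocks of size at most $4$ survive. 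Taking $|\cdot|$ in the generalized-function sense — factoring out the positive scalar $e^{-\lambda V}$, then using the triangle inequality, the multiplicativity of $|\cdot|$ over blocks (which involve disjoint sets of $y$-variables), and $|\delta_z|=\delta_z$ — bounds each block $S$ by $\tfrac{4!\,\lambda}{(4-|S|)!}$ times $\chi_\Lambda(y_{i_S})|\phi_{y_{i_S}}|^{4-|S|}$ times $|S|-1$ delta functions identifying the points of $S$, with $i_S\in S$ arbitrary. (This step — the exponential expansion and the behaviour of $|\cdot|$ under tensor products of locally integrable functions and $\delta$'s — is routine given the conventions already set up.)

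The decisive point is that $B'_{x_1},\,B'_{x_1,x_2},\,\dots,\,B'_{x_1,\dots,x_n}$ are pairwise disjoint, each of Lebesgue measure $\le\omega_d:=|B_1|$ (being contained in the unit ball around $x_j$): this follows from the telescoping identity $B'_{x_1,\dots,x_j}=B_{x_1,\dots,x_j}\setminus B_{x_1,\dots,x_{j-1}}$. Hence, after integrating each $y_{j,k}$ over $B'_{x_1,\dots,x_j}$: any block meeting two distinct levels contributes $0$, because its $\delta$-functions would force equality of points confined to disjoint regions; a monochromatic block of size $m$ at level $j$ contributes at most $\tfrac{4!\,\lambda}{(4-m)!}\int_{B'_{x_1,\dots,x_j}\cap\Lambda}|\phi_y|^{4-m}dy\le\tfrac{4!\,\lambda}{(4-m)!}\,\omega_d^{m/4}V_j^{(4-m)/4}$ by H\"older, where $V_j:=\int_{B'_{x_1,\dots,x_j}\cap\Lambda}\phi_y^4\,dy$; and, crucially, $\sum_{j=1}^n V_j\le V$, so $e^{-\lambda V}\le\prod_j e^{-\lambda V_j}$. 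Everything now factorizes over $j$. For fixed $j$, grouping the surviving partitions by block-type and summing (relaxing the constraint that the block sizes add up to $h_j$, which only enlarges the sum and produces a clean generating-function identity) gives $\tfrac{1}{h_j!}\sum_{\pi_j\vdash\{1,\dots,h_j\}}\prod_{S\in\pi_j}(\cdots)\le\exp\big(\sum_{m=1}^{4}\binom{4}{m}\lambda\,\omega_d^{m/4}V_j^{(4-m)/4}\big)$, so the $j$-th factor of the bound equals $e^{-\lambda V_j}\exp\big(\sum_m\binom4m\lambda\omega_d^{m/4}V_j^{(4-m)/4}\big)=\exp\big(\lambda\,g(V_j)\big)$ with $g(v):=\sum_{m=1}^{4}\binom4m\omega_d^{m/4}v^{(4-m)/4}-v$.

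To conclude I would note that $g$ is bounded above on $[0,\infty)$: its positive part is a polynomial of degree $3$ in $v^{1/4}$, while the $-v=-(v^{1/4})^4$ term eventually dominates. Setting $c_2:=\sup_{v\ge0}g(v)<\infty$ and using $\lambda\ge0$, each of the $n$ level-factors is $\le e^{c_2\lambda}$, whence $\|e^{-\lambda V}\|_{\bm x;\bm h}^\phi\le e^{c_2\lambda n}$ for all $\lambda\ge0$. The main obstacle — and the real content of the statement — is the appearance of the odd field powers $\phi^3$ and $\phi$ from blocks of size $1$ and $3$: these cannot be bounded by a $\phi$-independent constant on their own, and must be controlled by the damping in $e^{-\lambda V}$. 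Making that quantitative forces the splitting $V\ge\sum_j V_j$ over the disjoint regions, so that $e^{-\lambda V_j}$ is available to compensate exactly the blocks living at level $j$; bounding $e^{-\lambda V}\le1$ prematurely, or letting the regions $B'_{x_1,\dots,x_j}$ overlap, would ruin the estimate.
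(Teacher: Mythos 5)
Your proposal is correct and follows essentially the same route as the paper: a Fa\`{a} di Bruno expansion into blocks of size at most $4$, factorization over the pairwise disjoint shells $B'_{x_1,\dots,x_j}$ (with cross-level blocks killed by the $\delta$-functions and the volume of each shell bounded by that of the unit ball), and absorption of the lower powers of $\phi$ into the quartic damping. The only difference is the final elementary step --- the paper absorbs pointwise via $\sup_{\xi\ge0}\big((1+\xi)^4-2\xi^4\big)<\infty$ times the volume $v_d$, whereas you first apply H\"{o}lder to reduce everything to the scalar $V_j$ and then bound $\sup_{v\ge0}g(v)$ --- and both yield the same conclusion.
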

\begin{proof}
For any $D\!\in\!\mathcal{B}'$, we have
\begin{align}
&\Big(\prod_{k=1}^h\int_D\!dy_k\Big)\bigg|\Big(\prod_{k=1}^h\frac{\delta}{\delta\phi_{y_k}}\Big) e^{-\lambda\!\int_D\!dx\,\phi^4_x}\bigg|\le e^{-\lambda\!\int_D\!dx\,\phi^4_x}\sum_{p_1,\dots,p_4\ge 0}^{\sum_{r=1}^4\! rp_r=h}\notag\\
&\frac{h!}{\prod_{r=1}^4 p_r!(r!)^{p_r}}\prod_{r=1}^4\bigg(\Big(\prod_{k=1}^{r}\int_D\!dy_k\Big)\bigg|\Big(\prod_{k=1}^{r}\frac{\delta}{\delta\phi_{y_k}}\Big)\lambda\! \int_D\!dx\,\phi^4_x\bigg|\bigg)^{p_r}\notag\\
&\le h!\,e^{-\lambda\!\int_D\!dx\,\phi^4_x}\sum_{p_1,\dots,p_4\ge 0}
\prod_{r=1}^4\frac{1}{p_r!}\Big(\frac{4!}{r!(4\!-\!r)!}\lambda\!\int_D\!dx\,|\phi_x|^{4-r}\Big)^{p_r}\notag\\
&= h!\exp\Big\{\lambda\!\int_D\!dx\,\big((1\!+\!|\phi_x|)^4-2\phi^4_x\big)\Big\}\le h!\, e^{c\lambda |D|}
\end{align}
with $c\!=\!\sup_{\xi\ge 0}\big((1\!+\!\xi)^4\!-\!2\xi^4\big)$.
Then, for $\bm{x}\!=\!(x_j)_{j=1}^n\!\in\!\mathcal{X}_{n}$ and $\bm{h}\!=\!(h_j)_{j=1}^n\!\in\! \mathcal{H}_n$,
\begin{align}
\Big\|e^{-\lambda\!\int_\Lambda\!dx\,\phi^4_x}\Big\|_{\bm{x};\bm{h}}^\phi
&=e^{-\lambda\!\int_{\Lambda\backslash B_{x_1,\dots,x_n}}\!\!dx\,\phi^4_x}\prod_{j=1}^n\frac{1}{h_j!}\Big(\prod_{k=1}^{h_j}\int_{\Lambda\cap B'_{x_1,\dots,x_j}}\!\!\!dy_{j,k}\Big)\notag\\
&\quad\,\,\bigg|\Big(\prod_{k=1}^{h_j}\frac{\delta}{\delta\phi_{y_{j,k}}}\Big) e^{-\lambda\!\int_{\Lambda\cap B'_{x_1,\dots,x_j}}\!\!dx\,\phi^4_x}\bigg|\le e^{c v_d\lambda n},
\end{align}
where we use the facts that the range of the integration over $y_{j,k}$ can be replaced by $\Lambda\!\cap\! B'_{x_1,\dots,x_j}$ here
and $|\Lambda\!\cap\! B'_{x_1,\dots,x_j}|$ can be bounded by $v_d$, the volume of the unit ball in $d$ dimensions.
\end{proof}

For $(x_j)_{j=1}^n\!\in\!\mathcal{X}_{n}$ and $\eta\!\in\!\mathcal{T}_n$, we have
\begin{align}
&\quad\,\,\bigg|\Big(\prod_{j=2}^n\Delta_{x_j;x_1,\dots,x_{\eta(j)}}^\phi\Big) e^{-\lambda\!\int_{\Lambda}\!dx\,\phi^4_x}F\bigg|\notag\\
&\le\Big(\prod_{j=1}^{n-1}d_{\eta}(j)!\Big)\Big(\prod_{j=2}^n \sup\nolimits_{y\in B_{x_{\eta(j)}}}\!\!C_{x_j,y}\Big)\bigg\|\Big(\prod_{j=2}^n\frac{\delta}{\delta\phi_{x_j}}\Big)e^{-\lambda\!\int_{\Lambda}\!dx\,\phi^4_x}F\bigg\|_{(x_j)_{j=1}^{n-1};\bm{d}_{\eta}}^\phi
\end{align}
with $d_{\eta}(j)\!=\!\big|\eta^{-1}(\{j\})\big|$ and $\bm{d}_{\eta}\!=\!(d_{\eta}(j))_{j=1}^{n-1}$.
Then, by \eqref{phi_norm_1}, Lemma 1 and the fact $\sup\nolimits_{y\in B_{x_{\eta(j)}}}\!\!C_{x_j,y}\le c_1 e^{2-2|x_j-x_{\eta(j)}|}$, we can continue with 
\begin{align}\label{bound2_exp_F}
&\le\Big(\prod_{j=1}^{n-1}d_{\eta}(j)!\Big)\Big(\prod_{j=2}^n c_1 e^{2-2|x_j-x_{\eta(j)}|}\Big)\sum_{\substack{\bm{h}\in\mathcal{H}_{n-1}\\\bm{h}\le\bm{d}_{\eta}}}\Big\|e^{-\lambda\!\int_{\Lambda}\!dx\,\phi^4_x}\Big\|_{(x_j)_{j=1}^{n-1};\bm{d}_{\eta}-\bm{h}}^\phi\notag\\
&\quad\;\,\bigg\|\!\sum\nolimits_{\substack{L,L'\subset\{2,\dots,n\}\\L'=\{2,\dots,n\}\backslash L}}(4\lambda)^{|L'|}\Big(\prod_{l\in L'}\phi^3_{x_l}\Big)\Big(\prod_{l\in L}\frac{\delta}{\delta\phi_{x_l}}\Big)F\bigg\|_{(x_j)_{j=1}^{n-1};\bm{h}}^\phi\notag\\
&\le\big(c_1e^{2+c_2\lambda}\big)^{n-1}\Big(\prod_{j=1}^{n-1}d_{\eta}(j)!\Big)e^{-2\sum_{j=2}^{n}|x_j-x_{\eta(j)}|}\,U_{x_1,\dots,x_n}^\phi(F),
\end{align}
where $U_{x_1,\dots,x_n}^\phi(F)$ is an abbreviation for
\begin{align}
\sum\nolimits_{\substack{L,L'\subset\{2,\dots,n\}\\L'=\{2,\dots,n\}\backslash L}}(4\lambda)^{|L'|}\sum_{\bm{h}\in\mathcal{H}_{n-1}}\bigg\|\Big(\prod_{l\in L'}\phi^3_{x_l}\Big)\Big(\prod_{l\in L}\frac{\delta}{\delta\phi_{x_l}}\Big)F\bigg\|_{(x_j)_{j=1}^{n-1};\bm{h}}^\phi.
\end{align}

We now deal with the three factors $U_{x_1,\dots,x_n}^\phi(F)$, $e^{-2\sum_{j=2}^{n}|x_j-x_{\eta(j)}|}$ and 
$\prod_{j=1}^{n-1}d_{\eta}(j)!$ in \eqref{bound2_exp_F} one by one. 
First, let us consider $U_{x_1,\dots,x_n}^\phi(F)$ with $F\!=\!\phi_{w_1}\cdots\phi_{w_r}$:
\begin{align}\label{expectation_of_U_phi}
U_{x_1,\dots,x_n}^\phi(\phi_{w_1}\cdots\phi_{w_r})
&\le\sum\nolimits_{\substack{L,L'\subset\{2,\dots,n\}\\L'=\{2,\dots,n\}\backslash L}}\sum\nolimits_{\substack{\tau:L\to\{1,\dots,r\}\\\tau \text{ injective}}}(4\lambda)^{|L'|}\Big(\prod_{l\in L}\delta_{x_l,w_{\tau(l)}}\Big)\notag\\
&\quad\,\sum_{\bm{h}\in\mathcal{H}_{n-1}}\bigg\|\Big(\prod_{l\in L'}\phi^3_{x_l}\Big)\Big(\prod_{i\in\{1,\dots,r\}\backslash\tau(L)}\phi_{w_i}\Big)\bigg\|_{(x_j)_{j=1}^{n-1};\bm{h}}^\phi\notag\\
&\le\sum\nolimits_{\substack{L,L'\subset\{2,\dots,n\}\\L'=\{2,\dots,n\}\backslash L}}\sum\nolimits_{\substack{\tau:L\to\{1,\dots,r\}\\\tau \text{ injective}}}(4\lambda)^{|L'|}\Big(\prod_{l\in L}\delta_{x_l,w_{\tau(l)}}\Big)\notag\\
&\quad\,
\prod_{l\in L'}\big(1\!+\!|\phi_{x_l}|\big)^3\prod_{i\in\{1,\dots,r\}\backslash\tau(L)}\big(1\!+\!|\phi_{w_i}|\big).
\end{align}
Before going further, we provide the following lemma, which is needed to bound the expectations $\langle U_{x_1,\dots,x_n}^\phi(\phi_{w_1}\cdots\phi_{w_r})\rangle_{\bm{t};\bm{x}}^\phi$.
\begin{lemma}
For $w_1,\dots,w_r\!\in\!\mathbb{R}^d$ and $(x_1,\dots,x_n)\!\in\!\mathcal{X}_{n}$,
\begin{align}
\Big\langle\prod_{i=1}^r \big(1\!+\!|\phi_{w_i}|\big)\prod_{j=1}^n \big(1\!+\!|\phi_{x_j}|\big)^3\Big\rangle_{\bm{t};\bm{x}}^\phi
\le c_3^r c_4^n (r!)^{1/2}.
\end{align}
\end{lemma}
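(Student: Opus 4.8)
The plan is to reduce the bound to a statement about a single Gaussian variable and then control the combinatorics of the Wick expansion. First I would note that each factor $(1+|\phi_y|)$ with $y$ one of the $w_i$ or $x_j$ is a positive random variable, so by H\"older's inequality it suffices to bound moments $\langle (1+|\phi_y|)^p\rangle_{\bm t;\bm x}$ for each point separately, using the fact that $(C_{\bm t;\bm x})_{y,y}\le c_1$ uniformly (from the pointwise bound $0\le (C_{\bm t;\bm x})_{x,y}\le c_1 e^{-2|x-y|}$ proved in the excerpt). However, a naive H\"older split loses the $(r!)^{1/2}$ versus $c^r$ distinction, so instead I would keep the product structure and expand via Wick's theorem / Gaussian integration by parts. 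The key observation is that $1+|\phi_y|\le e^{|\phi_y|}$-type bounds are too lossy; rather one should write $(1+|\phi_y|)^3\le 8(1+\phi_y^2)(1+|\phi_y|)$ or simply bound $1+|\phi_y|$ by $(1+\phi_y^2)^{1/2}\cdot\text{const}$, reducing everything to even moments, i.e.\ to $\langle \prod_i (1+\phi_{w_i}^2)\prod_j(1+\phi_{x_j}^2)^3\rangle_{\bm t;\bm x}$, which is a polynomial expectation computable by Wick pairing.

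Next I would expand the polynomial expectation as a sum over pairings (a permanent-type sum) of products of covariance entries. Each $w_i$ carries degree $\le 2$ and each $x_j$ carries degree $\le 6$, so the total degree is $\le 2r+6n$ and the number of pairings is bounded by $(2r+6n)!!\le C^{r+n}(r+n)!^{1/2}\le C^{r+n} r!^{1/2} n!^{1/2}$ up to adjusting constants via $\binom{r+n}{r}\le 2^{r+n}$. The crucial point that kills the $n!^{1/2}$ is the exponential decay $(C_{\bm t;\bm x})_{x_j,x_{j'}}\le c_1 e^{-2|x_j-x_{j'}|}$ together with the separation constraint $|x_j-x_{j'}|>1$ from the definition of $\mathcal X_n$: when a pairing links two distinct centers $x_j,x_{j'}$ the covariance factor is exponentially small, and summing (integrating is not needed here since the $x_j$ are fixed, but one uses that the lattice-like separation makes $\sum_{j'}e^{-2|x_j-x_{j'}|}$ bounded) over which center each leg pairs with costs only a geometric constant per vertex, not a factorial. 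Thus the $n$-dependence collapses to $c_4^n$. The legs attached to the $w_i$ can each pair either among themselves, or to an $x_j$, or (via the $+1$ in $1+\phi^2$) simply be dropped; since each $w_i$ contributes only $O(1)$ legs and the covariance is bounded, the $w$-combinatorics gives at most $c_3^r$ times the pairing count restricted to $w$-legs, and pairings of $2r$ objects number $(2r-1)!!\le c^r (r!)^{1/2}$, which is exactly the claimed $(r!)^{1/2}$.

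Concretely, the steps in order: (i) bound $1+|\phi_y|\le \sqrt2\,(1+\phi_y^2)^{1/2}$ and $(1+|\phi_y|)^3\le 2\sqrt2\,(1+\phi_y^2)^{3/2}$, then use $(1+\phi_y^2)^{1/2}\le 1+\phi_y^2$ and $(1+\phi_y^2)^{3/2}\le (1+\phi_y^2)^2$ (or keep half-integer powers and handle them by one more H\"older step) to get an even-polynomial integrand with $w$-degrees $\le 4$ and $x$-degrees $\le 8$ — the exact exponents only affect the constants $c_3,c_4$; (ii) apply Wick's theorem, writing the expectation as a sum over perfect matchings of the legs, each term a product of $(C_{\bm t;\bm x})$ entries; (iii) organize the matchings by recording, for each leg, whether it is matched to a $w$-leg, matched to an $x$-leg at some specified center, or ``self-matched'' among legs at the same point; (iv) bound each matched pair by $c_1$ if both legs sit at points within distance $1$ (same center cluster) and by $c_1 e^{-2|x_j-x_{j'}|}$ otherwise, then sum the resulting geometric series over center assignments using $\sum_{j'\ne j}e^{-2|x_j-x_{j'}|}\le c$ (a consequence of $|x_j-x_{j'}|>1$ and standard packing in $\mathbb R^d$); (v) collect: the $x$-side contributes $\le (\text{const})^n$, the $w$-side contributes $\le (\text{const})^r (r!)^{1/2}$ from the number of $w$-leg matchings, and the cross terms only improve the bound, yielding $c_3^r c_4^n (r!)^{1/2}$.

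The main obstacle I anticipate is step (iv)–(v): making sure the factorial growth is attributed solely to the $w$-legs and that the $n$-dependence is genuinely only exponential. The danger is double-counting or an over-naive matching bound that produces $(n!)^{1/2}$ from the $x$-side; the remedy is precisely to exploit the hard-core separation $|x_j-x_{j'}|>1$, which turns the sum over ``which center does this leg pair with'' into a convergent geometric sum with an $n$-independent ratio, so each of the $O(n)$ legs costs only an $O(1)$ factor. A secondary nuisance is keeping track of the half-integer exponents if one does not immediately round up to even integers; the cleanest route is to round up at the very start so that only genuine Gaussian polynomial moments appear, at the cost of slightly larger but still harmless constants $c_3$ and $c_4$.
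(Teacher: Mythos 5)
Your treatment of the $x$-side is correct and is, in substance, the same mechanism the paper uses: the hard-core separation $|x_j-x_{j'}|>1$ together with the bound $0\le (C_{\bm{t};\bm{x}})_{x,y}\le c_1e^{-2|x-y|}$ makes $\sum_{j'}e^{-2|x_j-x_{j'}|}$ bounded uniformly in $n$, so with only a bounded number of legs per center the whole $x$-combinatorics costs $c^n$ rather than a factorial. (The paper packages this as the inductive claim $|\langle\prod_j\phi_{x_j}^{s_j}\rangle_{\bm{t};\bm{x}}^\phi|\le c^s\prod_j(s_j!)^{1/2}$ proved by Gaussian integration by parts, but the content is the same.)

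The gap is on the $w$-side, and it is a genuine one. Step (v) rests on the assertion that the number of perfect matchings of $2r$ legs satisfies $(2r-1)!!\le c^r(r!)^{1/2}$; this is false, since $(2r-1)!!=(2r)!/(2^r r!)\sim 2^r r!/\sqrt{\pi r}$, and $r!/(r!)^{1/2}=(r!)^{1/2}$ outgrows every $c^r$. Concretely, if you round $1+|\phi_{w_i}|$ up to $\sqrt2\,(1+\phi_{w_i}^2)$ ``at the very start,'' as your last paragraph recommends, the best available bound is $\langle\prod_i\phi_{w_i}^2\rangle_{\bm{t};\bm{x}}^\phi\le(2c_1)^r r!$, and this is sharp up to a geometric factor: taking all $w_i$ equal gives exactly $(2r-1)!!\,(C_{\bm{t};\bm{x}})_{w,w}^r$, since the $w_i$ enjoy no separation and hence no decay. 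So your route proves the lemma with $r!$ in place of $(r!)^{1/2}$, a strictly weaker statement; the exponents at the $w$-points do \emph{not} ``only affect the constants.'' The missing idea is the one you mention only parenthetically and then discard: a Cauchy--Schwarz step $\langle\prod_i|\phi_{w_i}|\cdot X\rangle_{\bm{t};\bm{x}}^\phi\le\big(\langle\prod_i\phi_{w_i}^2\rangle_{\bm{t};\bm{x}}^\phi\,\langle X^2\rangle_{\bm{t};\bm{x}}^\phi\big)^{1/2}$ with $X=\prod_j|\phi_{x_j}|^{r_j}$, which puts the unavoidable $r!$ of the second moment under a square root and is precisely how the paper obtains $(r!)^{1/2}$ (the factor $\langle X^2\rangle$ is then handled by the separated-point estimate). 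Promote that ``one more H\"older step'' from an aside to the backbone of the $w$-estimate and the argument closes.
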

\begin{proof}
First, we claim that, for $\sum_{j=1}^n\! s_j\!=\!2s$ even and  $(x_1,\dots,x_n)\!\in\!\mathcal{X}_{n}$,
\begin{align}
\bigg|\Big\langle\prod_{j=1}^n \phi_{x_j}^{s_j}\Big\rangle_{\bm{t};\bm{x}}^\phi\bigg|\le c^{s}\prod_{j=1}^n (s_j!)^{1/2},
\end{align}
which is proved by induction on $s$ and
is trivial for $s\!=\!0$. Assuming it holds for $s\!=\!\bar{s}\!-\!1$, we consider the case  $s\!=\!\bar{s}$ and assume further $s_1\!=\!\max_{j=1}^n\!s_j$ without loss of generality. By performing integration by parts and applying the inductive assumption, 
we have
\begin{align}
\bigg|\Big\langle\prod_{j=1}^n \phi_{x_j}^{s_j}\Big\rangle_{\bm{t};\bm{x}}^\phi\bigg|
&= \bigg|\Big\langle\!\int_{\mathbb{R}^d}\!\!dx\,(C_{\bm{t};\bm{x}})_{x_1,x}\frac{\delta}{\delta \phi_x}\Big(\phi_{x_1}^{s_1-1}\prod_{j=2}^n \phi_{x_j}^{s_j}\Big)\Big\rangle_{\bm{t};\bm{x}}^\phi\bigg|\notag\\
&\le (s_1\!-\!1)(C_{\bm{t};\bm{x}})_{x_1,x_1}\bigg|\Big\langle\phi_{x_1}^{s_1-2}\prod_{j=2}^n \phi_{x_j}^{s_j}\Big\rangle_{\bm{t};\bm{x}}^\phi\bigg|+\sum_{k=2}^n\notag\\
&\quad\,\, s_k (C_{\bm{t};\bm{x}})_{x_1,x_k}\bigg|\Big\langle\phi_{x_1}^{s_1-1}\phi_{x_k}^{s_k-1}
\prod_{\substack{j=2,\dots,n\\j\neq k}}\phi_{x_j}^{s_j} \Big\rangle_{\bm{t};\bm{x}}^\phi\bigg|\notag\\
&\le c^{\bar{s}-1}\prod_{j=1}^n (s_j!)^{1/2}\sum_{j=1}^n (C_{\bm{t};\bm{x}})_{x_1,x_j},
\end{align}
where $\phi_{x}^{-1}$ is temporarily defined as 0. 
Since
\begin{align}
&\sum_{j=1}^n (C_{\bm{t};\bm{x}})_{x_1,x_j}\le c_1\sum_{j=1}^n e^{-2|x_1-x_j|}\le \frac{2^dc_1e}{v_d}\sum_{j=1}^n\notag\\&\int_{|x-x_j|\le 1/2} \!\!dx\,e^{-2|x_1-x|}
\le\frac{2^dc_1e}{v_d} \int_{\mathbb{R}^d}\!\!dx\,e^{-2|x|}<\infty,
\end{align}
we can advance the induction for $c$ sufficiently large and thus complete the proof of the claim. 

Then, by the Cauchy-Schwartz inequality,
\begin{align}
\Big\langle\prod_{i=1}^r |\phi_{w_i}|\prod_{j=1}^n |\phi_{x_j}|^{r_j}\Big\rangle_{\bm{t};\bm{x}}^\phi
&\le\bigg(\Big\langle\prod_{i=1}^r \phi_{w_i}^2\Big\rangle_{\bm{t};\bm{x}}^\phi\Big\langle\prod_{j=1}^n \phi_{x_j}^{2r_j}\Big\rangle_{\bm{t};\bm{x}}^\phi\bigg)^{1/2}\notag\\
&\le \Big((2c_1)^r r! \prod_{j=1}^n c^{r_j}(2r_j)!^{1/2}\Big)^{1/2},
\end{align}
which yields
\begin{align}
&\quad\,\Big\langle\prod_{i=1}^r \big(1\!+\!|\phi_{w_i}|\big)\prod_{j=1}^n \big(1\!+\!|\phi_{x_j}|\big)^3\Big\rangle_{\bm{t};\bm{x}}^\phi\notag\\
&\le 2^{r+3n}\!\sup_{I\subset\{1,\dots,r\}}\sup_{\substack{r_1,\dots,r_n\\0\le r_j\le 3}}
\Big\langle\prod_{i\in I} |\phi_{w_i}|\prod_{j=1}^n |\phi_{x_j}|^{r_j}\Big\rangle_{\bm{t};\bm{x}}^\phi\notag\\
&\le 2^{r+3n}\big((2c_1)^{r}(\sqrt{6!}\,c^3)^{n} r!  \big)^{1/2}.
\end{align}
\end{proof}

The factor $e^{-2\sum_{j=2}^{n}|x_j-x_{\eta(j)}|}$ will play two roles in our derivation. 
One half of the factor is used later to extract some exponential decay in external points $w_1,\dots,w_r$. 
The other half is used to control the integrations over $x_2,\dots,x_n$ as follows.
By \eqref{expectation_of_U_phi} and Lemma 2, we have, for $\bm{t}\!\in\!\mathcal{I}_{n}$ and $\eta\!\in\!\mathcal{T}_n$,
\begin{align} \label{integration_of_U}
&\quad\,\,\Big(\prod_{j=1}^{n-1}\int_{(B_{x_{1},\dots,x_{j}})^{\mathsf{c}}}\!\!\!dx_{j+1}\Big)e^{-\sum_{j=2}^{n}|x_j-x_{\eta(j)}|}
\big\langle U^\phi_{x_1,\dots,x_n}(\phi_{w_1}\!\cdots\phi_{w_r})
\big\rangle_{\bm{t};(x_j)_{j=1}^{n-1}}^\phi\notag\\
&\le 
\Big(\prod_{j=1}^{n-1}\int_{(B_{x_{1},\dots,x_{j}})^{\mathsf{c}}}\!\!\!dx_{j+1}\Big)
e^{-\sum_{j=2}^{n}|x_j-x_{\eta(j)}|}\sum\nolimits_{\substack{L,L'\subset\{2,\dots,n\}\\L'=\{2,\dots,n\}\backslash L}}\sum\nolimits_{\substack{\tau:L\to\{1,\dots,r\}\\\tau \text{ injective}}}\notag\\
&\quad\;\,(4\lambda)^{|L'|}\Big(\prod_{l\in L}\delta_{x_l,w_{\tau(l)}}\Big)
\Big\langle\prod_{l\in L'}\big(1\!+\!|\phi_{x_l}|\big)^3\prod_{i\in\{1,\dots,r\}\backslash\tau(L)}\big(1\!+\!|\phi_{w_i}|\big)\Big\rangle_{\bm{t};(x_j)_{j=1}^{n-1}}^\phi\notag\\
&\le \sum\nolimits_{\substack{L\subset\{2,\dots,n\}\\|L|\le r}}\sum\nolimits_{\substack{\tau:L\to\{1,\dots,r\}\\\tau \text{ injective}}}(4c_4c_5\lambda)^{n-1-|L|}c_3^{r-|L|}(r\!-\!|L|)!^{1/2}\notag\\
&\le c_3^r r!(8c_4c_5\lambda)^{n-1}\max\{(4c_3c_4c_5\lambda)^{-\min\{r,n-1\}}, 1\}
\end{align}
with $c_5=\!\int_{\mathbb{R}^d}\!dx\, e^{-|x|}$. 
In the last line of \eqref{integration_of_U}, we use the facts that the number of subsets of $\{2,\dots,n\}$ is $2^{n-1}$ and the number of injective maps from $L$ to $\{1,\dots,r\}$ with $|L|\!\le\!r$ is $\frac{r!}{(r-|L|)!}$.
We assume from now on that $\lambda\!<\!(4c_3c_4c_5)^{-1}$.

To control the factor $\prod_{j=1}^{n-1}d_{\eta}(j)!$, we need the following lemma:
\begin{lemma} [Speer \cite{B99}] \label{lemma3} 
For $n\!\ge\!2$, 
\begin{align}\label{combinatorial_inequality}
\sum_{\eta\in\mathcal{T}_n}\!\Big(\prod_{j=1}^{n-1}d_{\eta}(j)!\Big)\Big(\prod_{j=1}^{n-1}\!\int_0^1\!\!dt_j\Big)\prod_{j=2}^n \prod_{l=\eta(j)}^{j-2}t_l=\frac{1}{n}\binom{2n\!-\!2}{n\!-\!1}\le 4^{n-1}.
\end{align} 
\end{lemma}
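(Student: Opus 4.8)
The plan is to prove Lemma 3 (the Speer identity) by relating the sum over ordered trees, weighted by the product of descendant factorials times the integral of the monomial $\prod_{j=2}^n\prod_{l=\eta(j)}^{j-2}t_l$ over $[0,1]^{n-1}$, to a combinatorial count whose generating function we recognize as the Catalan numbers. First I would compute the $\bm t$-integral explicitly: for a fixed ordered tree $\eta\in\mathcal T_n$, each variable $t_l$ appears in the product with exponent equal to the number of indices $j\in\{2,\dots,n\}$ with $\eta(j)\le l\le j-2$, so the integral factorizes as a product over $l=1,\dots,n-1$ of $1/(e_\eta(l)+1)$, where $e_\eta(l)=\#\{j:\eta(j)\le l\le j-2\}$. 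The key combinatorial observation is that $e_\eta(l)+1$ counts exactly the vertices $j\in\{1,\dots,l+1\}$ lying ``below'' the edge at level $l$ in a suitable sense — more precisely, one checks that the weights $\prod_j d_\eta(j)!$ and $\prod_l 1/(e_\eta(l)+1)$ combine so that the total sum telescopes.

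The cleanest route is a bijective/recursive one. I would set $a_n$ equal to the left-hand side of \eqref{combinatorial_inequality} and derive a recursion by conditioning on the value $\eta(n)\in\{1,\dots,n-1\}$ and on the set of vertices $j<n$ with $\eta(j)=\eta(n)$ that are ``siblings to the right'' — equivalently, splitting the ordered tree at its last vertex. After integrating out $t_{n-1}$ (which appears only through the factors coming from $j$ with $\eta(j)\le n-1\le j-2$, i.e.\ only $j=n$ when $\eta(n)\le n-1$, but $n-1\le n-2$ is false, so in fact $t_{n-1}$ never appears and contributes a trivial factor $1$ — this already simplifies the bookkeeping), one is left with a sum over $\mathcal T_{n-1}$-type structures with a modified weight. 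Carrying this out carefully yields the Catalan recursion $a_n=\sum_{k=1}^{n-1} a_k\, a_{n-k}$ with $a_1=1$, whose solution is $a_n=\frac1n\binom{2n-2}{n-1}$, the $(n-1)$-st Catalan number. The final bound $\frac1n\binom{2n-2}{n-1}\le 4^{n-1}$ is immediate from $\binom{2n-2}{n-1}\le 2^{2n-2}=4^{n-1}$.

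Alternatively, since the result is attributed to Speer \cite{B99}, I would simply cite it, but give enough of the above reduction that the reader can reconstruct the proof: namely, perform the $t$-integration to get $\sum_{\eta\in\mathcal T_n}\prod_{j=1}^{n-1}\frac{d_\eta(j)!}{e_\eta(j)+1}$, and then either invoke the known evaluation or verify the Catalan recursion. A self-contained argument would establish the bijection between pairs (ordered tree on $n$ vertices, choice of linear order on the children of each vertex) weighted appropriately and binary-tree-like structures counted by Catalan numbers; the factorials $d_\eta(j)!$ are precisely what converts the ``ordered children'' count into an ``unordered children'' count, matching the plane-tree interpretation of Catalan numbers.

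The main obstacle is the bookkeeping in the $t$-integration step: one must correctly identify, for each $l$, the exponent with which $t_l$ occurs, namely $\#\{j\in\{2,\dots,n\}:\eta(j)\le l\text{ and }j\ge l+2\}$, and then show that summing $\prod_l 1/(e_\eta(l)+1)$ against $\prod_j d_\eta(j)!$ over all ordered trees produces the Catalan number. Getting the recursion to close requires choosing the right vertex to remove (the last one, $n$) and correctly tracking how its removal affects both the descendant counts $d_\eta(\cdot)$ and the level-occupation counts $e_\eta(\cdot)$; this is where a careful induction or an explicit bijection is essential, and where most of the work lies. I expect roughly one page of combinatorics, or a single sentence if we are content to cite Speer.
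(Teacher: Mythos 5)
Your preliminary computations are all correct: the exponent of $t_l$ in the monomial is indeed $e_\eta(l)=\#\{j:\eta(j)\le l\le j-2\}$, so the $\bm t$-integral evaluates to $\prod_l (e_\eta(l)+1)^{-1}$; the observation that $t_{n-1}$ never occurs is right; and the final bound $\tfrac1n\binom{2n-2}{n-1}\le 2^{2n-2}=4^{n-1}$ is immediate. The answer you target (the $(n-1)$-st Catalan number) is also the correct one. But the heart of the lemma --- showing that $\sum_{\eta\in\mathcal T_n}\prod_j d_\eta(j)!\prod_l(e_\eta(l)+1)^{-1}$ actually equals $C_{n-1}$ --- is asserted rather than proved. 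You write that ``carrying this out carefully yields the Catalan recursion $a_n=\sum_{k=1}^{n-1}a_ka_{n-k}$,'' and you yourself flag that ``this is where most of the work lies.'' The specific route you propose does not obviously close: removing the last vertex $n$ attached at $k$ multiplies the weight by $(d_{\eta'}(k)+1)$ (from the factorial) and inserts the monomial $\prod_{l=k}^{n-2}t_l$ into an integral that does not factorize over the remaining tree $\eta'$; the resulting sum over $k$ and $\eta'\in\mathcal T_{n-1}$ is therefore not of the convolution form $\sum_k a_k a_{n-k}$ in any visible way. (The Catalan convolution normally comes from splitting off the subtree at the root's first child, which is a different decomposition and would require re-indexing the interpolation parameters.) So the proposal contains a genuine gap exactly at the combinatorial identity the lemma is about.

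It is worth noting how the paper resolves precisely the coupling that blocks your recursion. Rather than trying to factorize, it introduces the auxiliary quantity $A_n(x)$ obtained by inserting $\prod_{j=1}^n\bigl(1-\omega\prod_{l=j}^{n-1}t_l\bigr)^{-d_\eta(j)-1}$ into the weighted sum, where $\omega=x(1-\omega)^{-1}$ is the Catalan generating function. Differentiating this product in a new variable $t_n$ reproduces exactly the attachment factor $(d_\eta(k)+1)\prod_{l=k}^{n-1}t_l$ summed over $k$, which is the same combination that appears when one passes from $\mathcal T_n$ to $\mathcal T_{n+1}$; this gives the telescoping relation $A_n(x)=B_n+xA_{n+1}(x)$ and hence $\omega=x+\sum_{n\ge2}B_nx^n$, identifying $B_n$ with the coefficients of $\omega$. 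If you want to complete your argument along your own lines, you must either supply the explicit bijection you allude to or find the correct decomposition that makes the convolution manifest; as it stands, citing Speer (or reproducing a generating-function argument like the paper's) is the only complete option in your write-up.
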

\begin{proof}
For completeness, we give a new proof for this lemma. Let
\begin{align}\label{g_x_1}
\omega=x(1\!-\!\omega)^{-1}=\frac{1\!-\!\sqrt{1\!-\!4x}}{2}=\sum_{n=1}^\infty \frac{1}{n}\binom{2n\!-\!2}{n\!-\!1} x^n.
\end{align}
Given $\eta\!\in\!\mathcal{T}_n$, we have
\begin{align}\label{eq1_of_lemma3}
&\quad\,\prod_{j=1}^n{\Big(1-\omega\prod_{l=j}^{n-1}t_l \Big)^{-d_{\eta}(j)-1}}-1=\int_0^1\!\!dt_n\,\frac{\partial}{\partial t_n}\prod_{j=1}^n{\Big(1-\omega\prod_{l=j}^{n}t_l \Big)^{-d_{\eta}(j)-1}}\notag\\
&=\sum_{k=1}^n\big(d_{\eta}(k)\!+\!1\big)\int_0^1\!\!dt_n\Big(\prod_{l=k}^{n-1}t_l\Big)x{(1\!-\!\omega)^{-1}}\prod_{j=1}^n{\Big(1-\omega\prod_{l=j}^{n}t_l \Big)^{-d_{\eta}(j)-\delta_{j,k}-1}},
\end{align}
where $d_{\eta}(n)$ is defined as 0. Multiplying both sides of \eqref{eq1_of_lemma3} by 
\begin{align}
\Big(\prod_{j=1}^{n-1}d_{\eta}(j)!\Big)\Big(\prod_{j=2}^n \prod_{l=\eta(j)}^{j-2}t_l\Big),
\end{align}
performing integrations over $t_1,\dots,t_{n-1}$ and summing over all $\eta\!\in\!\mathcal{T}_n$, we obtain $A_n(x) \!-\! B_n \!=\! xA_{n+1}(x)$ with
\begin{gather} 
A_n(x) = \sum_{\eta\in\mathcal{T}_n}\!\Big(\prod_{j=1}^{n-1}d_{\eta}(j)!\Big)\Big(\prod_{j=1}^{n-1}\!\int_0^1\!\!dt_j\Big)\Big(\prod_{j=2}^n \prod_{l=\eta(j)}^{j-2}t_l\Big)\prod_{j=1}^n{\Big(1-\omega\prod_{l=j}^{n-1}t_l \Big)^{-d_{\eta}(j)-1}}\\
\intertext{and}
B_n = \sum_{\eta\in\mathcal{T}_n}\!\Big(\prod_{j=1}^{n-1}d_{\eta}(j)!\Big)\Big(\prod_{j=1}^{n-1}\!\int_0^1\!\!dt_j\Big)\prod_{j=2}^n \prod_{l=\eta(j)}^{j-2}t_l,
\end{gather}
which yields
\begin{align}\label{g_x_2}
\omega=x(1\!-\!\omega)^{-1}&=x+x^2\!\int_0^1\!\!dt_1(1\!-\!t_1 \omega)^{-2}{(1\!-\!\omega)^{-1}}=x+x^2 A_{2}(x)\notag\\
&=x+x^2 B_{2} +x^3 A_3(x)=\cdots=x+\sum_{n\ge 2}x^n B_n.
\end{align}
Comparing \eqref{g_x_1} and \eqref{g_x_2}, we complete the proof.
\end{proof}

Now we are ready to perform the estimation for $\bm{Z}_{r,\Lambda;w_1,\dots,w_r;x_1,\dots,x_n}$ with $n\!\ge\!2$.
First, we have 
\begin{align}
&\,\big|\bm{Z}_{r,\Lambda;w_1,\dots,w_r;x_1,\dots,x_n}\big|\le\sum_{\eta\in\mathcal{T}_n}\!\Big(\prod_{j=1}^{n-1}\!\int_0^1\!\!dt_j\Big)\Big(\prod_{j=2}^n \prod_{l=\eta(j)}^{j-2}t_l\Big)\notag\\
&\bigg\langle\Big|\Big(\prod_{j=2}^n\Delta_{x_j,B'_{x_1,\dots,x_{\eta(j)}}}^\phi\Big) e^{-\lambda\!\int_{\Lambda\cap B_{x_1,\dots,x_n}}\!\!dx\,\phi^4_x}\phi_{w_1}\!\cdots\phi_{w_r}\Big|\bigg\rangle_{(t_j)_{j=1}^{n-1};(x_j)_{j=1}^{n-1}}^\phi \notag\\
&\!\le\sum_{\eta\in\mathcal{T}_n}\!\Big(\prod_{j=1}^{n-1}d_{\eta}(j)!\Big)\Big(\prod_{j=1}^{n-1}\!\int_0^1\!\!dt_j\Big)\Big(\prod_{j=2}^n \prod_{l=\eta(j)}^{j-2}t_l\Big)\big(c_1 e^{2+c_2\lambda}\big)^{n-1}\notag\\
&\,e^{-2\sum_{j=2}^{n}|x_j-x_{\eta(j)}|}\big\langle U^\phi_{x_1,\dots,x_n}(\phi_{w_1}\!\cdots\phi_{w_r})\big\rangle_{(t_j)_{j=1}^{n-1};(x_j)_{j=1}^{n-1}}^\phi.
\end{align}
Using \eqref{integration_of_U}, Lemma \ref{lemma3} and the fact $\sum_{j=2}^{n}|x_j\!-\!x_{\eta(j)}| \!\ge\! \ell_{x_{1},\dots,x_{n}}$, we obtain that
\begin{align}\label{integration_of_Z}
&\quad\;\Big(\prod_{j=1}^{n-1}\int_{(B_{x_{1},\dots,x_{j}})^{\mathsf{c}}}\!\!\!dx_{j+1}\Big)
\big|\bm{Z}_{r,\Lambda;w_1,\dots,w_r;x_1,\dots,x_n}\big|\,e^{\ell_{x_{1},\dots,x_{n}}}\notag\\
&\le\sum_{\eta\in\mathcal{T}_n}\!\Big(\prod_{j=1}^{n-1}d_{\eta}(j)!\Big)\Big(\prod_{j=1}^{n-1}\!\int_0^1\!\!dt_j\Big)\Big(\prod_{j=2}^n \prod_{l=\eta(j)}^{j-2}t_l\Big)\Big(\prod_{j=1}^{n-1}\int_{(B_{x_{1},\dots,x_{j}})^{\mathsf{c}}}\!\!\!dx_{j+1}\Big)\notag\\
&\quad\;\big(c_1e^{2+c_2\lambda}\big)^{n-1}e^{-\sum_{j=2}^{n}|x_j-x_{\eta(j)}|}\big\langle U_{x_1,\dots,x_n}^\phi(\phi_{w_1}\!\cdots\phi_{w_r})\big\rangle_{(t_j)_{j=1}^{n-1};(x_j)_{j=1}^{n-1}}^\phi\notag\\
&\le c_3^r r!\big(32c_1c_4c_5\lambda\, e^{2+c_2\lambda}\big)^{n-1}(4c_3c_4c_5\lambda)^{-\min\{r,n-1\}}.
\end{align}

\begin{proof}[Proof of Theorem 1]
For $\bm{w}\!\in\!\mathbb{R}^{dr}$ and $(\bm{x},z_1)\!\in\!\mathcal{X}_{n+1}$,
\begin{align}
&\big|(\bm{A}_{0}\bm{f})_{\bm{w};\bm{x},z_1}\big|\le\bm{1}_{\bm{x}\not\in\mathcal{X}_0}|\bm{f}_{\bm{w};\bm{x}}|+|\bm{Z}_{0,(B_{\bm{x}})^{\mathsf{c}};z_{1}}\!-\!1||\bm{f}_{\bm{w};\bm{x},z_1}|\;+\notag\\
&\!\sum_{m\ge 2}\Big(\prod_{j=1}^{m-1}\int_{(B_{\bm{x},z_{1},\dots,z_{j}})^{\mathsf{c}}}\!\!\!dz_{j+1}\Big)
|\bm{Z}_{0,(B_{\bm{x}})^{\mathsf{c}};z_{1},\dots,z_{m}}||\bm{f}_{\bm{w};\bm{x},z_{1},\dots,z_{m}}|\notag\\
&\!\le r!\|\bm{f}\|_{r}\Big(2^{n-1} e^{-\ell_{\bm{w};\bm{x}}}+2^{n}|\bm{Z}_{0,(B_{\bm{x}})^{\mathsf{c}};z_{1}}\!-\!1| e^{-\ell_{\bm{w};\bm{x},z_1}}+\sum_{m\ge 2}2^{n+m-1}\notag\\
&\Big(\prod_{j=1}^{m-1}\int_{(B_{\bm{x},z_{1},\dots,z_{j}})^{\mathsf{c}}}\!\!\!dz_{j+1}\Big)|\bm{Z}_{0,(B_{\bm{x}})^{\mathsf{c}};z_{1},\dots,z_{m}}|e^{-\ell_{\bm{w};\bm{x},z_1,\dots,z_m}}\Big).
\end{align}
By the inequalities
$\ell_{\bm{w};\bm{x}}\ge\ell_{\bm{w};\bm{x},z_1}$ and
$\ell_{\bm{w};\bm{x},z_1,\dots,z_m}\ge\ell_{\bm{w};\bm{x},z_1}-\ell_{z_1,\dots,z_m}$,
we have
\begin{align}
&\,\|\bm{A}_{0}\|_{r\to r}=\sup_{\|\bm{f}\|_{r}=1}\|\bm{A}_0\bm{f}\|_{r}\notag\\
&\le \esssup_{(\bm{x},z_1)\in \bigcup_{n\ge 1}\!\mathcal{X}_{n}}\!\Big(\tfrac12 +|\bm{Z}_{0,(B_{\bm{x}})^{\mathsf{c}};z_{1}}\!-\!1|+\!\sum_{m\ge 2}2^{m-1}\notag\\
&\,\Big(\prod_{j=1}^{m-1}\int_{(B_{\bm{x},z_{1},\dots,z_{j}})^{\mathsf{c}}}\!\!\!dz_{j+1}\Big)|\bm{Z}_{0,(B_{\bm{x}})^{\mathsf{c}};z_{1},\dots,z_{m}}|\,e^{\ell_{z_1,\dots,z_m}}\Big),
\end{align}
where
\begin{align}
\big|\bm{Z}_{0,(B_{\bm{x}})^{\mathsf{c}};z_{1}}\!-\!1\big|&=\lambda\!\int_0^1\!\!dt\!\int_{B'_{\bm{x},z_1}}\!\!\!dy\!\int\! d\mu(\phi)\, 
e^{-t\lambda\!\int_{B'_{\bm{x},z_1}}\!\!dx\,\phi^4_x}\phi^4_y\notag\\
&\le\lambda\!\int_{B_{z_1}}\!\!\!dy\!\int\! d\mu(\phi) 
\,\phi^4_{y}\le 3c_1^2 v_d\lambda
\end{align}
and the integration of $|\bm{Z}_{0,(B_{\bm{x}})^{\mathsf{c}};z_{1},\dots,z_{m}}|\,e^{\ell_{z_1,\dots,z_m}}$ is bounded by $\big(32c_1c_4c_5\lambda\, e^{2+c_2\lambda}\big)^{m-1}$.
Thus $\|\bm{A}_{0}\|_{r}\le \tfrac12 +3c_1^2 v_d\lambda+\sum_{m\ge 2}\big(64c_1c_4c_5\lambda\, e^{2+c_2\lambda}\big)^{m-1}\le\tfrac34$ for $\lambda$ sufficiently small.

Also, for $1\!\le\! s\!\le\! r$, $w_1,\dots,w_r\!\in\!\mathbb{R}^d$ and $(\bm{x},z_1)\!\in\!\mathcal{X}_{n+1}$,
\begin{align}
\big|(\bm{A}_{s}\bm{f})_{w_1,\dots,w_r;\bm{x},z_1}\big|&\le\sum\nolimits_{\substack{I,I'\subset\{1,\dots,r\}\\|I|=s,I'=\{1,\dots,r\}\backslash I}}\sum_{m\ge 1}
\Big(\prod_{j=1}^{m-1}\!\int_{(B_{\bm{x},z_{1},\dots,z_{j}})^{\mathsf{c}}}\!\!\!dz_{j+1}\Big)\notag\\
&\quad\;\big|\bm{Z}_{s,(B_{\bm{x}})^{\mathsf{c}};(w_i)_{i\in I};z_{1},\dots,z_{m}}\big|\big|\bm{f}_{(w_i)_{i\in I'};\bm{x},z_{1},\dots,z_{m}}\big|\notag\\
&\le(r\!-\!s)!\,\|\bm{f}\|_{r-s}\sum\nolimits_{\substack{I,I'\subset\{1,\dots,r\}\\|I|=s,I'=\{1,\dots,r\}\backslash I}}\sum_{m\ge 1}\,2^{n+m-1}\notag\\
&\quad\;\Big(\prod_{j=1}^{m-1}\!\int_{(B_{\bm{x},z_{1},\dots,z_{j}})^{\mathsf{c}}}\!\!\!dz_{j+1}\Big)
\big|\bm{Z}_{s,(B_{\bm{x}})^{\mathsf{c}};(w_i)_{i\in I};z_{1},\dots,z_{m}}\big|\notag\\
&\quad\;\exp\big\{\!-\!\ell_{(w_i)_{i\in I'};\bm{x},z_1,\dots,z_m}\!\big\}.
\end{align}
Since
$\bm{Z}_{s,(B_{\bm{x}})^{\mathsf{c}};(w_i)_{i\in I};z_{1},\dots,z_{m}}\!=\!0$ for $\{w_i|\,i\!\in\! I\}\!\not\subset\!B_{z_{1},\dots,z_{m}}$ and
\begin{align}
\ell_{(w_i)_{i\in I'};\bm{x},z_1,\dots,z_m}
&\ge\ell_{w_1,\dots,w_r;\bm{x},z_1,\dots,z_m} -\ell_{(w_i)_{i\in I};\bm{x},z_1,\dots,z_m}\notag\\
&\ge\ell_{w_1,\dots,w_r;\bm{x},z_1} -\ell_{z_1,\dots,z_m} - s
\end{align}
for $\{w_i|\,i\!\in\! I\}\!\subset\!B_{z_{1},\dots,z_{m}}$, we have
\begin{align}
&\|\bm{A}_{s}\|_{r-s\to r}=\sup_{\|\bm{f}\|_{r-s}=1}\|\bm{A}_s\bm{f}\|_{r}\notag\\
&\!\le \tfrac{(r-s)!}{r!}\esssup\nolimits_{\substack{w_1,\dots,w_r\in\mathbb{R}^d\\(\bm{x},z_1)\in \bigcup_{n\ge 1}\!\mathcal{X}_{n}}}\sum\nolimits_{\substack{I,I'\subset\{1,\dots,r\}\\|I|=s,I'=\{1,\dots,r\}\backslash I}}\sum_{m\ge 1}2^{m-1}\notag\\
&\Big(\prod_{j=1}^{m-1}\!\int_{(B_{\bm{x},z_{1},\dots,z_{j}})^{\mathsf{c}}}\!\!\!dz_{j+1}\Big)
\big|\bm{Z}_{s,(B_{\bm{x}})^{\mathsf{c}};(w_i)_{i\in I};z_{1},\dots,z_{m}}\big|\,e^{\ell_{z_1,\dots,z_m}+s}\notag\\
&\!\le (e c_3 )^s \sum_{m\ge 1}\big(64c_1c_4c_5\lambda\, e^{2+c_2\lambda}\big)^{m-1}(4c_3c_4c_5\lambda)^{-\min\{s,m-1\}}\notag\\
&\!\le \big(e c_3 (1\!+\!32e^2c_1c_3^{-1})\big)^s
\end{align}
for $\lambda$ sufficiently small.
\end{proof}

\begin{proof}[Proof of Theorem 2]
For $1\!\le\! s\!\le\! r$ and $w_1,\dots,w_r\!\in\!\mathbb{R}^d$,
\begin{align}
\big|(\bm{T}_{s,\Lambda}\bm{f})_{w_1,\dots,w_r}\big|
&\le(r\!-\!s)!\,\|\bm{f}\|_{r-s}\sum\nolimits_{\substack{I,I'\subset\{1,\dots,r\}\\1\in I,|I|=s,I'=\{1,\dots,r\}\backslash I}}\sum_{n\ge 1}\notag\\
&\,2^{n-1}\Big(\prod_{j=1}^{n-1}\int_{(B_{x_1,\dots,x_{j}})^{\mathsf{c}}}\!\!\!dx_{j+1}\Big)\big|\bm{Z}_{s,\Lambda,\mathbb{R}^d;(w_i)_{i\in I};x_1,\dots,x_n}\big|\notag\\
&\exp\big\{\!-\!\ell_{(w_i)_{i\in I'};x_1,\dots,x_n}\!\big\}\Big|_{x_1=w_1},
\end{align}
where
\begin{align}
\ell_{(w_i)_{i\in I'};x_1,\dots,x_n}
&\ge \ell_{w_1,\dots,w_r;x_1,\dots,x_n} - \ell_{(w_i)_{i\in I};x_1,\dots,x_n}\notag\\
&\ge \ell_{w_1,\dots,w_r} - \ell_{x_1,\dots,x_n}+ 1 - s
\end{align}
for $x_1\!=\!w_1$ and $\{w_i|\,i\!\in\! I\}\!\subset\!B_{x_1,\dots,x_n}$.
Then
\begin{align}
&\|\bm{T}_{s}\|'_{r-s\to r}
\le\tfrac{(r-s)!}{r!}\esssup_{w_1,\dots,w_r\in\mathbb{R}^d} \sum\nolimits_{\substack{I,I'\subset\{1,\dots,r\}\\1\in I,|I|=s,I'=\{1,\dots,r\}\backslash I}}\sum_{n\ge 1}\notag\\
&2^{n-1}\Big(\prod_{j=1}^{n-1}\int_{(B_{x_1,\dots,x_{j}})^{\mathsf{c}}}\!\!\!dx_{j+1}\Big)
\big|\bm{Z}_{s,\mathbb{R}^d;(w_i)_{i\in I};x_1,\dots,x_n}\big| \,e^{\ell_{x_1,\dots,x_n}+s}\Big|_{x_1=w_1}\notag\\
&\!\le (e c_3)^s
\sum_{n\ge 1}\big(64c_1c_4c_5\lambda\, e^{2+c_2\lambda}\big)^{n-1}(4c_3c_4c_5\lambda)^{-\min\{s,n-1\}}\notag\\
&\!\le \big(e c_3(1\!+\!32e^2c_1c_3^{-1})\big)^s
\end{align}
for $\lambda$ sufficiently small.
\end{proof}

\section*{Acknowledgements}
The work is partially supported by Wu Wen-Tsun Key Laboratory of Mathematics.
The author thanks Professor Zheng Yin for valuable advices and suggestions.

\end{document}